\documentclass[3p,authoryear]{elsarticle}
\usepackage{graphicx} % Required for inserting images
\usepackage{amssymb,amsmath,amsthm}
\usepackage[hidelinks]{hyperref}
\newtheorem{definition}{Definition}
\newtheorem{theorem}{Theorem}
\newtheorem{lemma}{Lemma}
\newtheorem{proposition}{Proposition}
\newtheorem{claim}{Claim}
\usepackage{wrapfig}
\usepackage{lipsum} % for dummy text
\renewcommand{\phi}{\varphi}
\newenvironment{proof-of-claim}{\begin{trivlist}\item\noindent{\em Proof of Claim.}}{\hfill {\footnotesize \qed}\end{trivlist}}
\usepackage{subcaption}

\begin{document}
\begin{frontmatter}

%\title{Peer Oversight in Decision Mechanisms}
\title{Peer Oversight in Collective Decision Making}
%\title{Peer Oversight: Trustworthiness by Design}
\author[soton]{Sarah Mohsen}
\ead{szrm1g23@soton.ac.uk}

\author[soton]{Pavel Naumov}
\ead{p.naumov@soton.ac.uk}

\address[soton]{School of Electronics and Computer Science\\ University of Southampton, United Kingdom}
\begin{abstract}
This article introduces peer $k$-oversight, a property of sequential collective decision mechanisms requiring at least $k$ agents to be responsible for every harmful outcome. It is shown that whenever $k$-oversight can be achieved by redistributing control over the decisions in a mechanism, it can be achieved using just $k$ agents. A polynomial-time algorithm is also presented that determines whether such a redistribution exists and, when it does, constructs one. These results establish peer oversight as a tractable design principle for multiagent decision-making systems.
\end{abstract}

\begin{keyword}
counterfactual responsibility \sep responsibility gap \sep mechanism design \sep multiagent systems \sep collective decision making
\end{keyword}

\end{frontmatter}

\section{Introduction}

In this article, we propose and study {\em peer oversight}, a formally defined concept of shared responsibility for a harmful outcome of a collective decision. We start with a discussion of the notion of responsibility. 

\subsection{Responsibility and Strategic Abilities}\label{20th century}

Consider a variation of Joe Halpern's fish and polluting factories example~\citeyearpar[p. 40, Example 2.3.8]{h16}. In our version of this example, set in the last century, there are three lakes (A, B, and C) and three factories (F1, F2, F3) located at the shores of these lakes, see Figure~\ref{intro-lakes-zero subfigure}. Each factory has accumulated a pollutant and must dump it into one of the two lakes it is connected to in the figure. For instance, factory F1 can dump either into Lake A or Lake B. If any two factories dump the pollutant into the same lake, the fish in that lake die. If all three of them dump the pollutant into different lakes, the fish survive.

By a {\em decision mechanism}, we mean the description of the process used to make a collective decision. First, let us consider a {\em concurrent} decision mechanism under which each factory chooses its action (the lake where it will dump the pollutant) independently and without knowing the actions chosen by the other factories. This mechanism is used in strategic (or normal form) games.

Suppose that factories F1 and F2 dump the pollutant into Lake B and the fish in Lake B die. Who is responsible for this? Responsibility is a vague concept that can be formally defined in many different ways. The most popular~\citep{w17} of them is based on Frankfurt's~\citeyearpar{f69tjop} principle of alternative possibilities\footnote{It is worth noting that~\citep{f69tjop} discusses limitations of this definition and constructs examples when the definition fails to capture the intuitive notion of moral responsibility.}:
{\em \dots\ a person is morally responsible for what he has done only if he could have done otherwise}. In the literature, ``could have done otherwise'' has been interpreted as having a strategy to avoid the harmful outcome no matter what the other agents do~\citep*{bd13clima,ydjal19aamas,nt20aaai,bfm21ijcai,s24aaai,sn25jpl,sn26aaai}.  

In this article, we refer to the responsibility defined through this principle as {counterfactual responsibility} or just {\em responsibility}. \cite{ch04jair} defined a degree of responsibility in causal models using a similar counterfactual approach.  

When applying the definition of counterfactual responsibility to our setting, it is important to specify what we mean by the ``harmful outcome''. If the harmful outcome consists of the death of the fish specifically in Lake B, then factory F1 had an individual strategy to avoid such an outcome by dumping the pollutant into Lake A. Similarly, factory F2 could avoid the death of the fish in Lake B by dumping the pollutant into Lake C. Both of these strategies avoid the death of the fish in Lake B by endangering the life of the fish at another lake. To avoid such trolley-like dilemmas, in this article, by ``harm'' we mean the death of the fish in any of the three lakes. In other words, a strategy that avoids the harmful outcome must guarantee the survival of the fish in all three lakes. 

\begin{figure}%[ht]
    \centering
    \begin{subfigure}[b]{0.39\textwidth}
        \begin{center}
        \scalebox{0.5}{\includegraphics{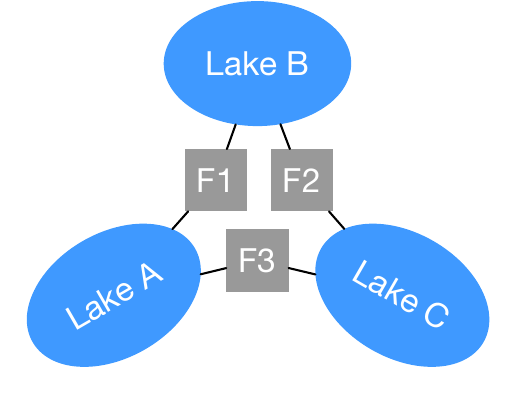}}
        \caption{Locations of factories}
       \label{intro-lakes-zero subfigure}
        \end{center}
    \end{subfigure}
    \hfill
    \begin{subfigure}[b]{0.59\textwidth}
        \begin{center}
       \scalebox{0.5}{\includegraphics{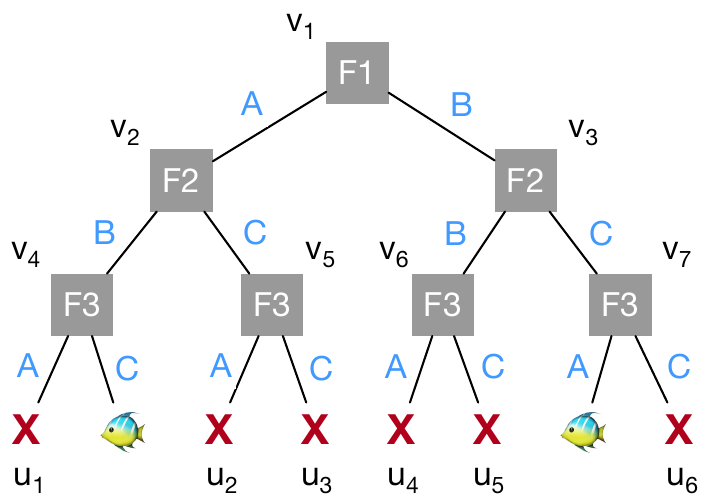}}
        \caption{Decision mechanism}
        \label{intro-lakes-tree subfigure}
        \end{center}
    \end{subfigure}
\caption{Three factories example.}
\end{figure}
It is easy to see that none of the three factories has a strategy that would guarantee the survival of the fish in all three lakes. Indeed, note, for example, that no matter what action factory F1 chooses, factories F2 and F3 can choose to dump their portions of the pollutant into Lake C and, thus, kill the fish in that lake. Hence, in our example, none of the three factories is responsible for the harmful outcome. Moreover, in this setting, no matter what action either of the factories chooses, none of them is ever responsible for the harmful outcome. 
If one is aiming to enforce individual responsibility in the fish and three factories setting, then a concurrent decision mechanism is probably the worst option to use!

Let us now consider, for the same setting, a {\em sequential} decision mechanism under which the factories choose their actions in the order F1, F2, and F3. We visualise this mechanism as a tree depicted in Figure~\ref{intro-lakes-tree subfigure}. The labels on the edges denote the lakes that the acting agents decided to dump the pollutant in. The leaf nodes of this tree represent outcomes. We label the harmful outcomes $u_1$ through $u_6$ with the red letter X and the rest with the fish emoji. 

Consider the {\em decision path} $v_1,v_2,v_4,u_1$ in Figure~\ref{intro-lakes-tree subfigure}. It corresponds to the scenario under which, first, factory F1 dumps the pollutant into Lake A; then, factory F2 dumps into Lake B; finally, F3 dumps into Lake A. Because two factories dumped into Lake A, the fish in this lake die. Note that factory F3 did not have a strategy to save the fish in all three lakes {\em upfront}, but it has such a strategy (dump into Lake C) at node $v_4$ along the decision path. Thus, factory F3 is (counterfactually) responsible for the death of the fish along the path $v_1,v_2,v_4,u_1$. It is easy to see that at any node along this path neither factory F1 nor factory F2 has a strategy to save the fish. Hence, neither of them is responsible along this path. Similarly, factory F3 is the only factory responsible along the decision path from node $v_1$ to node $u_6$.

Next, let us consider the decision path from node $v_1$ to node $u_2$. Neither of the three agents has an individual strategy to avoid harm at any of the nodes along this path. Thus, neither of them is (counterfactually) responsible along the path for the death of the fish. We say that this path belongs to the {\em responsibility gap} of the decision mechanism in Figure~\ref{intro-lakes-tree subfigure}. In the literature, the responsibility gap is also sometimes referred to as a responsibility void. One might argue that factory F2 should be blamed along the path from node $v_1$ to node $u_2$ because, at $v_2$, it has chosen the action (dump into Lake C) that made the death of the fish in at least one of the lakes unavoidable. This argument captures a different form of responsibility that we discuss in Section~\ref{related work}.

\begin{figure}%[ht]
    \centering
    \begin{subfigure}[b]{0.45\textwidth}
        \begin{center}
        \scalebox{0.5}{\includegraphics{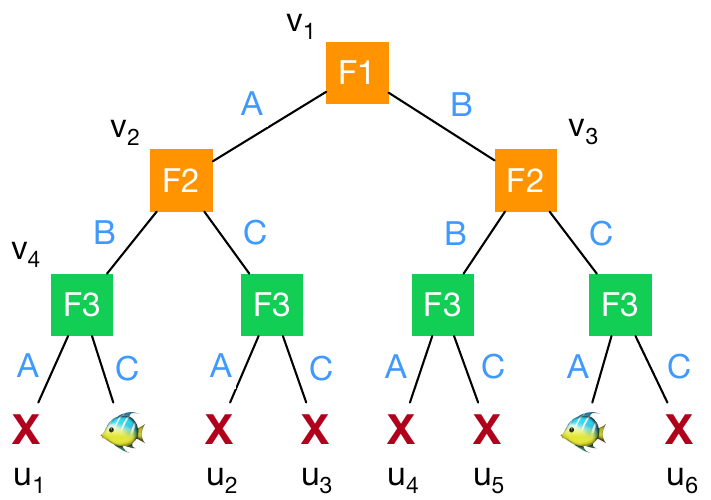}}
        \caption{Company Amber has an upfront strategy that guarantees the death of the fish in Lake B.}
       \label{intro-lakes-tree-colours subfigure}
        \end{center}
    \end{subfigure}
    \hfill
    \begin{subfigure}[b]{0.45\textwidth}
        \begin{center}
       \scalebox{0.5}{\includegraphics{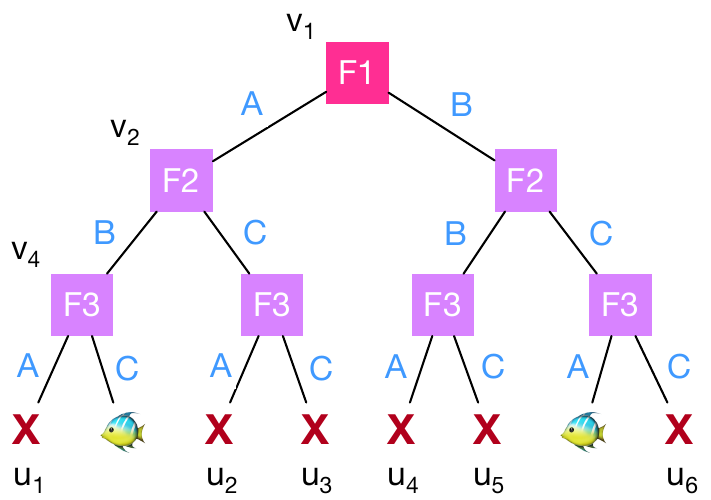}}
        \caption{Company Purple has a strategy to save the fish
        along each decision path leading to a harmful outcome.}
        \label{intro-lakes-tree-colours-two subfigure}
        \end{center}
    \end{subfigure}
\caption{Three factories example.}
\end{figure}
The counterfactual responsibility gap can often be eliminated by combining multiple parties into a single agency. To illustrate this, let us first suppose factories F1 and F2 are controlled by the management of company Amber and factory F3 is controlled by company Green, see Figure~\ref{intro-lakes-tree-colours subfigure}. By combining the control over factories F1 and F2 in the hands of company Amber, we endow the company with strategic abilities that neither of the two factories has alone. For example, company Amber has an {\em upfront} strategy to guarantee the death of the fish in Lake B. The strategy is to dump the pollutant into Lake B at nodes $v_1$ and $v_3$, see Figure~\ref{intro-lakes-tree-colours subfigure}. However, under the distribution of the control over the factories between companies Amber and Green that we consider, neither of the companies has a strategy to avoid the harm along the decision paths to outcomes $u_2$, $u_3$, $u_4$, and $u_5$. Thus, combining the control over the factories F1 and F2 in the hands of company Amber does not eliminate the responsibility gap. 

The situation is different in Figure~\ref{intro-lakes-tree-colours-two subfigure}, where company Pink controls factory F1 and company Purple controls factories F2 and F3. In this setting, company Purple has an upfront strategy to avoid harm no matter what company Pink does. Indeed, if company Pink instructs factory F1 to dump the pollutant into Lake A, then company Purple should instruct factories F2 and F3 to dump into lakes B and C, respectively. If company Pink instructs factory F1 to dump the pollutant into Lake B, then company Purple should instruct factories F2 and F3 to dump into lakes C and A, respectively. In either case, the fish in all three lakes survive. This upfront strategy is the strategy at the root node $v_1$ of the tree. Thus, company Purple has a strategy to avoid harm at some node along each decision path leading to a harmful outcome. Hence, company Purple is responsible along each such path. Therefore, the responsibility gap of the decision mechanism depicted in Figure~\ref{intro-lakes-tree-colours-two subfigure} is empty. We say that this mechanism is {\em gap-free}. 

Together, Figures~\ref{intro-lakes-tree subfigure}, \ref{intro-lakes-tree-colours subfigure}, and \ref{intro-lakes-tree-colours-two subfigure} show that a well-designed aggregation of the control over individual decisions in the hands of larger agencies can potentially eliminate the responsibility gap of the decision mechanism.

\subsection{Peer Oversight}

From the last century example of Section~\ref{20th century}, let us now transition to the present day, where the industrial area grew to a six-factory complex, depicted in Figure~\ref{intro-lake-one subfigure}. All six factories are owned by the same multinational company and discharge the pollutant in the sequential order F1 through F6. The company employs a single AI agent (named Red) to decide which lake each factory dumps the pollutant into. Note that each lake is connected to four factories. We assume that, due to the improved environmental standards, {\em all four} factories must dump the pollutant into a lake in order to kill the fish in that lake. 

It is easy to see that agent Red has an upfront strategy to avoid harm by spreading the pollutant evenly between the three lakes. Thus, agent Red is responsible for the harm each time the fish die. Hence, this mechanism is gap-free.

AI agents have a tendency to be unreliable. To improve {\em trustworthiness} of the decision-making process, the company decides to replace the single-agent decision-making system with a multi-agent AI system consisting of agents Red, Amber, and Green. The Red agent chooses the lake for factories F1 and F4, the Amber agent chooses for F2 and F5, and the Green agent does this for F3 and F6, see Figure~\ref{intro-lakes-two subfigure}. The company made this change because of the following observation.

\begin{figure}%[ht]
    \centering
    \begin{subfigure}[b]{0.32\textwidth}
        \begin{center}
        \scalebox{0.5}{\includegraphics{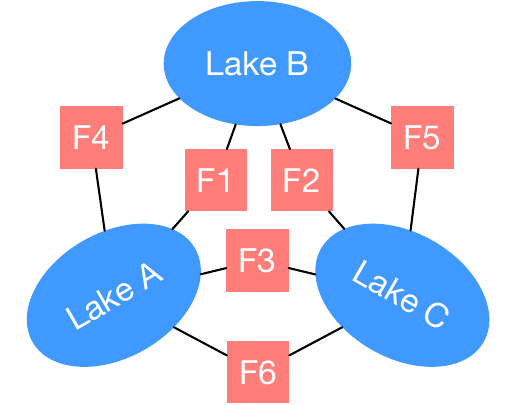}}
        \caption{Gap-free mechanism with a single agent}
       \label{intro-lake-one subfigure}
        \end{center}
    \end{subfigure}
    \hfill
    \begin{subfigure}[b]{0.32\textwidth}
        \begin{center}
       \scalebox{0.5}{\includegraphics{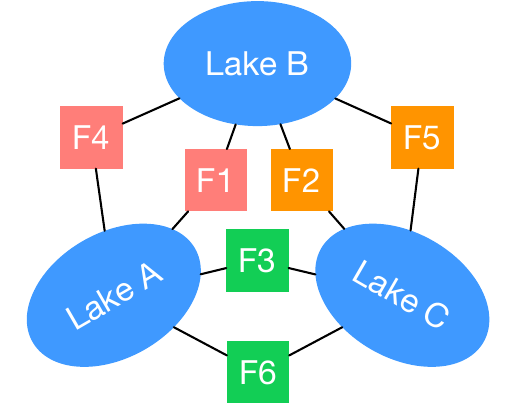}}
        \caption{Two-oversight by three agents}
        \label{intro-lakes-two subfigure}
        \end{center}
    \end{subfigure}
    \hfill
    \begin{subfigure}[b]{0.32\textwidth}
        \begin{center}
       \scalebox{0.5}{\includegraphics{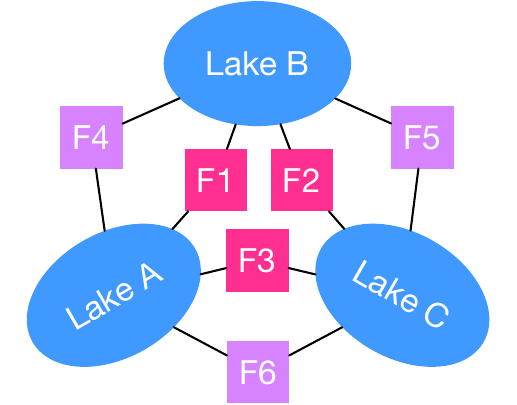}}
        \caption{Two-oversight by two agents}
        \label{intro-lakes-three subfigure}
        \end{center}
    \end{subfigure}
\caption{Six factories example.}
\end{figure}

\begin{proposition}\label{three-agents proposition}
Under the three-agent arrangement depicted in Figure~\ref{intro-lakes-two subfigure}, if the fish in one of the lakes is killed, then \underline{two} agents become responsible for the harmful outcome.    
\end{proposition}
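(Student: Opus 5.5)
The plan is to read off the structure of Figure~\ref{intro-lakes-two subfigure} and then verify counterfactual responsibility directly, node by node along the decision path. Reading the figure as in the single-agent version: each lake is connected to exactly four factories; each agent controls two factories, one among F1--F3 and one among F4--F6; and both factories of an agent are connected to the same pair of lakes. Concretely I take Red $=\{F1,F4\}$ with lakes $\{A,B\}$, Amber $=\{F2,F5\}$ with lakes $\{B,C\}$, and Green $=\{F3,F6\}$ with lakes $\{C,A\}$. If the figure assigns lakes differently, the argument should go through unchanged, since it uses only these structural features. The consequences I will use are:
\begin{itemize}
\item each lake is shared by exactly two agents, and it dies iff both of them send both of their factories into it;
\item every agent's first factory precedes every agent's second factory.
\end{itemize}

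First, I will note that at most one lake can die. Killing a lake uses four of the six factories, and each factory dumps into only one lake, so two dead lakes would need eight.

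Now fix a harmful path on which lake $L$ dies, shared by agents $X$ and $Y$. Let $L'$ be the other lake of $X$, and $L''$ the remaining lake, which is shared by $Y$ and the third agent $Z$.

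\textbf{$X$ is responsible.} Consider the node where $X$ decides its second factory. I claim that the action ``send it to $L'$'' avoids harm whatever the others do afterwards.
\begin{itemize}
\item Lake $L$ survives, because $X$'s second factory is not in $L$.
\item Lake $L'$ survives, because it needs both of $X$'s factories, and $X$'s first factory already went to $L$.
\item Lake $L''$ survives, because it needs both of $Y$'s factories. $Y$'s first factory lies among F1--F3, so it was decided before this node, and on our path it went to $L$.
\end{itemize}
The step I expect to need most care is this third item. It is exactly where the ordering fact is used: every agent's first factory comes before any second factory. Symmetrically, $Y$ is responsible at the node of its own second factory, so at least two agents are responsible.

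If the statement is meant as ``exactly two'', I will also show that $Z$ is not responsible. At every node on the path, $X$ and $Y$ can still complete the killing of $L$. Their remaining factories all go to $L$ on this very path, and $Z$ has no factory connected to $L$. Hence $Z$ has no strategy at any node of the path that guarantees the survival of all three lakes.
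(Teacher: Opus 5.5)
Your proof is correct, and it places responsibility exactly where the paper does. The responsible agents are the two whose second-round factories (among F4--F6) border the lake that dies, and each is responsible at its F4--F6 decision node, via the action of dumping into its other lake. What differs is how you check that this diversion saves every lake.

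The paper splits on the state after F1--F3. Either each lake has one portion, and then harm is already impossible because each lake borders only two of F4--F6. Or the split is $(2,1,0)$, so only the doubly polluted lake can die, and any single diversion caps it at three portions. That counting uses only two facts: each lake borders exactly two of F4--F6, and F4, F5, F6 belong to distinct agents. It does not need your assumption that an agent's two factories share a lake pair.

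Your reading of the figure is nonetheless right; the ``opposite'' strategy in the proof of the two-agent proposition presupposes it. It buys you a cleaner per-lake check: each of the other two lakes has an owner whose first-round factory already went into $L$. It also gives you the converse, that the third agent is never responsible. So you obtain exactly two responsible agents, whereas the paper shows at least two, which is all that $2$-oversight requires.
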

\begin{proof}
We split the decision process into two stages. Stage 1 covers the discharge decisions for the first three factories, F1, F2, and F3. Stage 2 covers the decisions for the remaining factories, F4, F5, and F6. Observe that, at the end of Stage 1, one of the following two cases takes place:

\noindent{\em Case I}: A single portion of the pollutant has been dumped in each of the three lakes. Note that among the remaining factories F4, F5, and F6 only two are connected to each individual lake. Hence, during Stage 2, at most two portions of the pollutant could be added to each lake. Thus, the total amount of the pollutant accumulated in each lake during both stages would not exceed three portions. Therefore, in this case, a harmful outcome is not possible.

\noindent{\em Case II}: Two portions of the pollutant have been dumped in one lake (call it lake X), one portion dumped in another lake (call it lake Y), and no pollution dumped into the third lake (call it lake Z). Observe, similarly to Case I, that at most two portions of the pollutant could be added to each lake during Stage 2. Thus, the fish in lakes Y and Z are guaranteed to survive no matter where factories F4, F5, and F6 dump the pollutant. Hence, to avoid a harmful outcome, it suffices to save the fish in lake X. Out of factories F4, F5, and F6, two are located at the shore of lake X (no matter which of the three lakes is lake X), see Figure~\ref{intro-lakes-two subfigure}. These two factories are controlled by different AI agents (because all three factories F4, F5, and F6 are controlled by different agents). Thus, two different agents can prevent the harm by instructing their respective factories not to dump the pollutant into lake X on Stage 2.
Therefore, these two agents will both be responsible if the harm happens.
\end{proof}
Intuitively, Proposition~\ref{three-agents proposition} means that in the setting of Figure~\ref{intro-lakes-two subfigure}, the three AI agents provide a safety net for one another, overseeing one another's work. We call such an arrangement {\em peer oversight}. 
More generally, we say that a decision mechanism ensures peer $k$-oversight (or just $k$-oversight) if, under that mechanism, at least $k$ agents are responsible for each harmful outcome.
Gap-free mechanisms, such as in Figure~\ref{intro-lake-one subfigure}, ensure one-oversight. The mechanism in Figure~\ref{intro-lakes-two subfigure} ensures two-oversight. The concept of peer $k$-oversight establishes an architectural design principle for trustworthy multiagent AI systems. In addition to protecting against error-prone AI agents, peer two-oversight also provides a defence mechanism against a single rogue agent, while peer $(k+1)$-oversight provides a defence mechanism against a collusion of $k$ rogue agents.  

As the example in Figure~\ref{intro-lakes-two subfigure} illustrates, for a given $k$, sometimes the nodes of a decision mechanism can be redistributed between multiple agents to ensure $k$-oversight. Of course, this is not always possible. For instance, one cannot divide the control over factories in Figure~\ref{intro-lakes-zero subfigure} between some number of agents to ensure two-oversight. In the case when redistribution of control ensuring $k$-oversight exists, one can naturally ask what the {\em minimal} number of agents required to ensure $k$-oversight is. As it turns out, in the case of our example in Figure~\ref{intro-lakes-two subfigure}, two-oversight can be ensured by just two agents. In Figure~\ref{intro-lakes-three subfigure}, we show how this can be done. In this case, agent Pink controls factories F1, F2, and F3, while agent Purple controls factories F4, F5, and F6.

\begin{proposition}\label{two-agents proposition}
Under the two-agent arrangement depicted in Figure~\ref{intro-lakes-three subfigure}, if the fish is killed in one of the lakes, then \underline{two} agents become responsible for the harmful outcome.    
\end{proposition}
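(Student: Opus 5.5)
The plan is to reuse the two-stage decomposition from the proof of Proposition~\ref{three-agents proposition}: Stage 1 consists of the decisions for F1, F2, F3, which now all belong to agent Pink, and Stage 2 consists of the decisions for F4, F5, F6, which all belong to agent Purple. I will show separately that each of the two agents is responsible along every decision path that ends in a harmful outcome. By the definition of counterfactual responsibility, this means exhibiting, for each agent, a node on the path at which that agent has a strategy guaranteeing that no harm occurs, whatever the other agent does.

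First, I need a structural fact about the factory--lake incidences. The proof of Proposition~\ref{three-agents proposition} already uses that each lake is on the shore of exactly two of F4, F5, F6. Since each of these factories touches two lakes, their lake pairs must be exactly $\{A,B\}$, $\{B,C\}$, $\{C,A\}$. Each lake touches four factories in total, so each lake also touches exactly two of F1, F2, F3. By the same counting, F1, F2, F3 also realise the three distinct pairs. Consequently the first three factories can be oriented ``cyclically'' so that each sends its pollutant to a different lake. This is the point I expect to need the most care: the argument depends on reading these incidences off Figure~\ref{intro-lake-one subfigure} correctly. I would justify it by the counting above rather than by pointing at the picture.

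Responsibility of Pink: I will show that Pink has an upfront strategy at the root node. Pink instructs F1, F2, F3 to dump into three different lakes, which is possible by the cyclic orientation. This places us in Case~I of the proof of Proposition~\ref{three-agents proposition}. That argument shows no lake can then accumulate four portions, whatever Purple does in Stage 2. The root lies on every decision path, so Pink is responsible along every harmful path.

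Responsibility of Purple: take any decision path leading to a harmful outcome, and consider the node on it reached right after Stage 1, where Purple is to act for F4. By Case~I, this node cannot be in Case~I, since harm was still possible from it. So it is in Case~II, with lake X holding two portions, lake Y one, and lake Z none. As in the earlier proof, lakes Y and Z are safe regardless of Stage 2. Purple therefore avoids harm by directing each of F4, F5, F6 to its lake other than X. This is always possible because every factory has two lakes to choose from. Hence Purple has a harm-avoiding strategy at a node on the path and is responsible too. Together, both agents are responsible for every harmful outcome.
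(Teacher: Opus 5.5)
Your proposal is correct. Your argument for Pink is the paper's: an upfront strategy at the root sends F1, F2, F3 to three different lakes, and since each lake touches only two of F4, F5, F6, no lake can reach four portions.

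For Purple you take a different route. The paper just runs the symmetric argument. Purple also has an \emph{upfront} strategy at the root: send F4, F5, F6 to three different lakes. Because each lake touches only two of F1, F2, F3, every lake ends with at most three portions. The paper then adds a ``mirror'' strategy that leaves exactly two portions in each lake. You instead use the fact that Purple moves second. At the node reached after Stage~1 on a harmful path, you apply the Case~I/Case~II analysis from the proof of Proposition~\ref{three-agents proposition} to find the overloaded lake X, and you steer F4, F5, F6 away from it.

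Both arguments are valid under Definition~\ref{responsible}, which only asks for a harm-avoiding strategy from \emph{some} node on the path. The paper's version is shorter and symmetric. It also gives the stronger fact that both agents can avert harm already at the root, which is exactly the kind of oversight the general construction in Lemma~\ref{construction lemma} produces. Your version needs a history-dependent strategy; this is legitimate because tree nodes encode the history.

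The reactive idea is also more robust than your write-up shows. After Stage~1 at most one lake holds two or more portions, and keeping Stage~2 out of that lake suffices. So Purple's part does not really depend on how F1, F2, F3 meet the lakes, although your Pink part still does. Deriving the incidence pattern by counting, rather than reading it off the figure, is a good addition; the paper leaves it implicit.
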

\begin{proof}
Agent Pink has an upfront strategy to avoid the harm by arranging the dumps of factories F1, F2, and F3 in such a way that a single portion of the pollutant is dumped by these three factories into each lake. Note that each lake is connected only to two out of three remaining factories (F4, F5, and F6). Thus, no matter what the remaining three factories do, they can dump at most two portions of the pollutant into each lake. Therefore, after all six dumps, each lake will have at most three portions of the pollutant and the fish in all three lakes will survive.

Agent Purple also has an upfront strategy to avoid the harm by arranging the dumps of factories F4, F5, and F6 in such a way that a single portion of the pollutant is dumped by these three factories into each lake. By the argument similar to the one above, this strategy guarantees survival of the fish in all three lakes.

Note that agent Purple, whose factories dump the pollutant after the factories of agent Pink, also has an {\em alternative} strategy to guarantee the survival of the fish in all three lakes. The alternative strategy consists in instructing factories F4, F5, and F6 to do the {\em opposite} of what is done by factories F1, F2, and F3, respectively. For instance, if factory F1 dumps the pollutant into Lake A, then factory F4 should dump its portion of the pollutant into Lake B. Remarkably, this alternative strategy guarantees that only {\em two} portions of the pollutant are dumped into each lake.
\end{proof}
The above proposition shows that, in our six-factory example, two-oversight can be ensured by just two agents. Of course, two distinct agents cannot be responsible in a single-agent setting. Thus, two is the {\em minimal} number of agents required to ensure two-oversight in the six-factory example.

\subsection{Contribution}

In this article, we show that what we observed in the six-factory example is not an exception but a general rule. Specifically, Theorem~\ref{k-oversight theorem} establishes that, in any decision mechanism ensuring $k$-oversight, control over the decision nodes can be redistributed among the agents so that the resulting mechanism uses only $k$ agents while preserving $k$-oversight. Note that if a harmful outcome is reachable, then at least $k$ agents are necessary to achieve $k$-oversight. Hence, the theorem implies that, in such settings, $k$ is precisely the minimum number of agents required whenever $k$-oversight is possible.

In addition to Theorem~\ref{k-oversight theorem}, we present a polynomial-time algorithm that check whether a $k$-oversight mechanism exists. If it exists, the algorithm also constructs a $k$-oversight mechanism using only $k$ agents.

In Section~\ref{related work}, we discuss an alternative concept of responsibility that combines counterfactual responsibility with responsibility for seeing to it and argue that the same results hold for that form of responsibility as well.

It is worth noting that although the decision mechanisms in both our introductory examples can be represented by a tree (see, for example, Figure~\ref{intro-lakes-tree subfigure}), we show our results in a significantly more general setting where the mechanism is a {\em directed graph} that might contain cycles.

\section{Mechanisms, Responsibility, and Peer Oversight}

We start with the formal definition of the game-like model that we use to capture the interaction process between agents.

\begin{definition}\label{transition system}
A (decision) {\bf\em mechanism} is a tuple $M=(V,E,Q,H,A,\{C_a\}_{a\in A})$, where
\begin{enumerate}
    \item $V$ is a finite set of {\bf\em nodes} and $E\subseteq V\times V$ is a set of directed {\bf\em edges}; let $\textit{Sink}_M$ be the set of nodes without outgoing edges; $V\setminus \textit{Sink}_M$ is the set of {\bf\em decision} nodes,
    \item $Q\subseteq V$ is a nonempty set of {\bf\em initial} nodes,
    \item $H\subseteq \textit{Sink}_M$ is a set of {\bf\em harmful} sink nodes, 
    \item $A$ is a set of {\bf\em agents},
    \item $\{C_a\}_{a\in A}$ is a disjoint family of {\bf\em control} sets of decision nodes such that $\bigcup_{a\in A}C_a\subseteq V\setminus \textit{Sink}_M$.
\end{enumerate}
\end{definition}
We omit the subscript in $\textit{Sink}_M$ when its value is clear from the context. An example of a mechanism is depicted in Figure~\ref{intro-lakes-tree subfigure}. In that example, set $V$ consists of the 15 nodes of the tree. The edges in that tree are assumed to be directed from the parent node to the child node. The set $\textit{Sink}$ consists of the 8 leaf nodes of the tree. The set $H$ consists of the nodes $u_1$, $u_2$, $u_3$, $u_4$, $u_5$, and $u_6$. Set $A$ of agents is $\{\text{F1},\text{F2},\text{F3}\}$. Set $C_{\text{F1}}$ contains only the root node $v_1$, set $C_{\text{F2}}$ contains only the nodes $v_2$ and $v_3$, and set $C_\text{F3}$ contains only nodes $v_4$, $v_5$, $v_6$, and $v_7$. Note again that, unlike the mechanism depicted in Figure~\ref{intro-lakes-tree subfigure}, in general, the graph $(V,E)$ might contain cycles.

\begin{definition}
For a given mechanism $(V,E,Q,H,A,\{C_a\}_{a\in A})$,
a {\bf\em strategy} of an agent $a\in A$ is a function $s$ from $C_a$ to $V$ such that $(u,s(u))\in E$ for each decision node $u\in C_a$.       
\end{definition}
An example of a strategy of agent F2 under the mechanism depicted in Figure~\ref{intro-lakes-tree subfigure} is a function that maps node $v_2$ into node $v_4$ and node $v_3$ into node $v_7$. 

By a {\em decision path} we mean any sequence of nodes $u_1,u_2,\dots,u_n$ such that $n\ge 1$ and $(u_i,u_{i+1})\in E$ for each $i<n$. Sequence $v_2,v_4,u_1$ is an example of a decision path for the mechanism depicted in Figure~\ref{intro-lakes-tree subfigure}.

%DEFINE decision path (does not have to start at an initial node)

\begin{definition}\label{avoids harm from a node}
For a given mechanism $(V,E,Q,H,A,\{C_a\}_{a\in A})$,
a {\bf\em strategy} of an agent $a\in A$ {\bf\em avoids harm from a node} $u\in V$ if for each decision path $u_1,\dots,u_n$ from node $u_1=u$ to a harmful sink node $u_n\in H$ there is $i<n$ such that $u_i\in C_a$ and $u_{i+1}\neq s(u_i)$.   
\end{definition}
For the example depicted in Figure~\ref{intro-lakes-tree subfigure}, any strategy $s$ of agent F3 such that $s(v_4)\neq u_1$ (in other words, at node $v_4$, factory F3 dumps the pollutant into Lake C rather than Lake A), avoids harm from node $v_4$. 

The next three lemmas capture important properties of strategies that avoid harm. These lemmas will be used later in the article.

\begin{lemma}\label{no strategy from harm}
For any mechanism,
no agent has a strategy that avoids harm from a harmful sink node.     
\end{lemma}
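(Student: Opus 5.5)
The plan is to exhibit, for any harmful sink node $u\in H$, a single decision path that violates the condition in Definition~\ref{avoids harm from a node}, whatever the agent and strategy. Fix an arbitrary mechanism $(V,E,Q,H,A,\{C_a\}_{a\in A})$, an agent $a\in A$, a strategy $s$ of $a$, and a node $u\in H$. The witness I will use is the trivial one-node path $u_1=u$, i.e.\ the case $n=1$, which is permitted because the definition of a decision path only requires $n\ge 1$.

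The key steps, in order, are these. First I check that $u_1$ is a decision path that starts at $u$ and ends in a harmful sink node: indeed $u_1=u$, $u_n=u_1=u\in H$, and the edge condition $(u_i,u_{i+1})\in E$ for $i<n$ is vacuous when $n=1$. Second, I observe that for this path there is no index $i$ with $i<n=1$, so the required $i<n$ with $u_i\in C_a$ and $u_{i+1}\neq s(u_i)$ cannot exist. Hence $s$ does not avoid harm from $u$. Since $a$ and $s$ were arbitrary, no agent has a strategy that avoids harm from $u$.

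As a remark, one could also argue through the fact that $u\in H\subseteq\textit{Sink}_M$ is not a decision node, so $u\notin C_a$ by item~5 of Definition~\ref{transition system}. However, the length-one path already suffices, and it does not need this fact. The only step requiring care, and it is minor, is making sure that single-node sequences really count as decision paths. This is explicit in the paper's definition ($n\ge 1$), so I do not expect a genuine obstacle.
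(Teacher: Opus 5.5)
Your proof is correct and follows the same approach as the paper: both use the single-node decision path $u_1=u$ (allowed because $n\ge 1$) and observe that no index $i<1$ exists, so the condition of Definition~\ref{avoids harm from a node} cannot be met. The only difference is that the paper phrases the argument as a contradiction, whereas you argue directly.
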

\begin{proof}
Towards a contradiction, suppose that agent $a$ has a strategy $s$ that avoids harm from a harmful sink node $u\in H$  under a mechanism $(V,E,Q,H,A,\{C_a\}_{a\in A})$. Consider the single-element decision path $u_1$ such that $u_1=u$. Thus, by Definition~\ref{avoids harm from a node}, there exists $i<1$ such that $u_i\in C_a$, which is a contradiction because the decision path contains no elements $u_i$ with $i<1$.
\end{proof}

\begin{lemma}\label{power transition lemma}
For a given mechanism  $(V,E,Q,H,A,\{C_a\}_{a\in A})$,
if a strategy $s$ of an agent $a\in A$ avoids harm from a node $u\notin C_a$, then strategy $s$ also avoids harm from any node $v$ such that $(u,v)\in E$.     
\end{lemma}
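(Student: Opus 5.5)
The argument is a direct unfolding of Definition~\ref{avoids harm from a node}. Fix a strategy $s$ of agent $a$ that avoids harm from a node $u\notin C_a$, and a node $v$ with $(u,v)\in E$. To show that $s$ avoids harm from $v$, we take an arbitrary decision path $v_1,\dots,v_m$ with $v_1=v$ and $v_m\in H$. We must then find an index $j<m$ such that $v_j\in C_a$ and $v_{j+1}\neq s(v_j)$.

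The key step is to prepend $u$. Because $(u,v)\in E$, the sequence $u_1,\dots,u_{m+1}$ defined by $u_1=u$ and $u_{i+1}=v_i$ for $1\le i\le m$ is a decision path. It starts at $u$ and ends at the harmful sink node $u_{m+1}=v_m\in H$. Since $s$ avoids harm from $u$, there is some $i<m+1$ with $u_i\in C_a$ and $u_{i+1}\neq s(u_i)$.

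Next we rule out $i=1$. In that case $u_1=u\in C_a$, which contradicts the assumption $u\notin C_a$. Therefore $i\ge 2$, and we set $j=i-1$. This gives $1\le j<m$, with $v_j=u_i\in C_a$ and $v_{j+1}=u_{i+1}\neq s(u_i)=s(v_j)$, which is exactly the required condition for the path from $v$.

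No step is a real obstacle; the only point needing care is the index bookkeeping. Excluding $i=1$ is precisely where the hypothesis $u\notin C_a$ is used. The shift $j=i-1$ keeps $j<m$ because $i<m+1$, so the witness lies strictly before the last node of the shorter path.
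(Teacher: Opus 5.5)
Your proposal is correct and matches the paper's proof: both prepend $u$ to an arbitrary harmful path from $v$, invoke the fact that $s$ avoids harm from $u$, and use $u\notin C_a$ to rule out the witness index falling at $u$. The only difference is cosmetic: the paper indexes the prepended node as $u_0$ so that no shift of indices is needed.
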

\begin{proof}
Consider any decision path $u_1,\dots,u_n$ from node $u_1=v$ to a harmful sink node $u_n\in H$. By Definition~\ref{avoids harm from a node}, it suffices to prove that there is $i<n$ such that $u_i\in C_a$ and $u_{i+1}\neq s(u_i)$.

Towards this proof, consider the decision path $u_0,u_1,\dots,u_n$, where $u_0=u$. By Definition~\ref{avoids harm from a node} and the assumption of the lemma that strategy $s$ avoids harm from node $u$, there must exist $i$, such that $0\le i<n$, $u_i\in C_a$, and $u_{i+1}\neq s(u_i)$. Observe that $i\neq 0$ by the assumption $u\notin C_a$ of the lemma because $u_0=u$. 
\end{proof}

Although the statement of the next lemma differs from the previous one, the proofs of these two lemmas are surprisingly similar.

\begin{lemma}\label{second power transition lemma}
For a given mechanism  $(V,E,Q,H,A,\{C_a\}_{a\in A})$,
if a strategy $s$ of an agent $a\in A$ avoids harm from a node $u\in C_a$, then strategy $s$ also avoids harm from the node $s(u)$.     
\end{lemma}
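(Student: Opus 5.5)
The plan is to mirror the proof of Lemma~\ref{power transition lemma} closely: take an arbitrary decision path from $s(u)$ to a harmful sink, prepend $u$ to it, apply the hypothesis to the extended path, and show that the index it provides cannot be the prepended one.

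Concretely, consider any decision path $u_1,\dots,u_n$ with $u_1=s(u)$ and $u_n\in H$. By Definition~\ref{avoids harm from a node}, it suffices to find $i<n$ with $u_i\in C_a$ and $u_{i+1}\neq s(u_i)$. Set $u_0=u$. Since $s$ is a strategy of agent $a$ and $u\in C_a$, we have $(u,s(u))\in E$, that is, $(u_0,u_1)\in E$. Hence $u_0,u_1,\dots,u_n$ is a decision path from $u$ to the harmful sink $u_n$. Because $s$ avoids harm from $u$, there is some $i$ with $0\le i<n$, $u_i\in C_a$, and $u_{i+1}\neq s(u_i)$.

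It remains to exclude $i=0$, and this is the only place where the argument departs from the proof of Lemma~\ref{power transition lemma}. There, $i=0$ was ruled out by $u\notin C_a$. Here $u\in C_a$ holds, so we instead use the choice of the path: if $i=0$, then $u_1\neq s(u_0)=s(u)$, which contradicts $u_1=s(u)$. Therefore $1\le i<n$, and this $i$ witnesses the required condition for the original path. I do not expect any real obstacle; the only point to check is that the extended sequence is a legitimate decision path, and this follows from the definition of a strategy.
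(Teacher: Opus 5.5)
Your proof is correct and is essentially the paper's proof: prepend $u$ to the path from $s(u)$, apply the hypothesis, and rule out $i=0$ because $u_1=s(u)=s(u_0)$. Your explicit check that $(u_0,u_1)\in E$ follows from the definition of a strategy, which makes the prepended sequence a legitimate decision path, is a detail the paper leaves implicit.
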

\begin{proof}
Consider any decision path $u_1,\dots,u_n$ from node $u_1=s(u)$ to a harmful sink node $u_n\in H$. By Definition~\ref{avoids harm from a node}, it suffices to prove that there is $i<n$ such that $u_i\in C_a$ and $u_{i+1}\neq s(u_i)$.

Towards this proof, consider the decision path $u_0,u_1,\dots,u_n$, where $u_0=u$. By Definition~\ref{avoids harm from a node} and the assumption of the lemma that strategy $s$ avoids harm from node $u$, there must exist $i$, such that $0\le i<n$, $u_i\in C_a$, and $u_{i+1}\neq s(u_i)$. Observe that $i\neq 0$ because $u_{0+1}=u_1=s(u)=s(u_0)$.    
\end{proof}

We are now ready to give the core definition of this article, the one that captures the notion of counterfactual responsibility.

\begin{definition}\label{responsible}
An agent is {\bf\em responsible} along a decision path that terminates at a harmful sink node if the agent has a strategy that avoids harm from at least one node along the path.
\end{definition}
For instance, in the setting of Figure~\ref{intro-lakes-tree subfigure}, agent F3 is responsible along the path $v_2,v_4,u_1$ because F3 has a strategy at node $v_4$ to avoid harm.  

\begin{definition}\label{oversight}
For any integer $k\ge 1$, a mechanism ensures (peer) $k${\bf\em -oversight} if there are at least $k$ distinct agents responsible along each decision path from an initial node to a harmful sink node.
\end{definition}
Note that the mechanism ensures 1-oversight iff it has no responsibility gap. Thus, for instance, the mechanism depicted in Figure~\ref{intro-lakes-tree-colours-two subfigure} ensures oversight, while the mechanism depicted in Figure~\ref{intro-lakes-tree-colours subfigure} does not.
Our introduction contains no trees representing the decision mechanism depicted in Figure~\ref{intro-lake-one subfigure}, Figure~\ref{intro-lakes-two subfigure}, and Figure~\ref{intro-lakes-three subfigure}. This is because these trees contain $2^6$ leaf (sink) nodes and, thus, are simply too large to draw. However, as we proved in Proposition~\ref{three-agents proposition} and Proposition~\ref{two-agents proposition}, the last two mechanisms ensure 2-oversight. The single-agent mechanism in Figure~\ref{intro-lake-one subfigure} only ensures 1-oversight.

\begin{definition}\label{structurally equivalent}
Mechanisms $(V,E,Q,H,A,\{C_a\}_{a\in A})$ and
$(V',E',Q',H',A',\{C'_a\}_{a\in A'})$ are {\bf\em structurally equivalent} if $V=V'$, $E=E'$, $Q=Q'$, and $H=H'$.
\end{definition}
The mechanisms depicted in Figure~\ref{intro-lakes-tree-colours subfigure} and Figure~\ref{intro-lakes-tree-colours-two subfigure} are structurally equivalent. And so are those depicted in Figure~\ref{intro-lake-one subfigure}, Figure~\ref{intro-lakes-two subfigure}, and Figure~\ref{intro-lakes-three subfigure}.

\section{First Result: All That You Need Are $k$ Agents}

In this section, we prove our first main result: for any mechanism that ensures $k$-oversight, there is a structurally equivalent mechanism with just $k$ agents that also ensures $k$-oversight. In the rest of the article, by {\em extended} natural numbers we mean the set of non-negative integers extended by the element $\infty$. As is common in mathematics, we assume that: (a) $\infty$ is larger than each of the non-negative integers, (b) $\infty+1=\infty$, and (c) the minimal element of the empty set is $\infty$.

At the core of the proofs of our two main results is the notion of {\em ranking function}, defined below.

\begin{figure}%[ht]
    \centering
       \begin{subfigure}[b]{0.32\textwidth}
        \begin{center}
       \scalebox{0.5}{\includegraphics{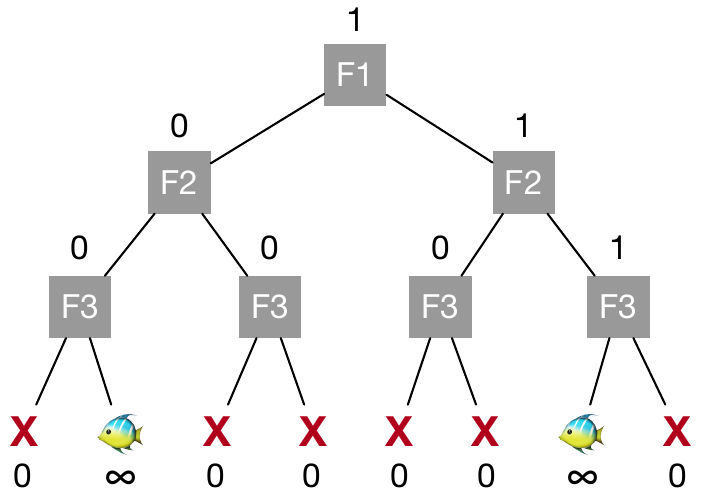}}
        \caption{A ranking function}
        \label{intro-lakes-phi subfigure}
        \end{center}
    \end{subfigure}
\hspace{20mm}
    \begin{subfigure}[b]{0.32\textwidth}
        \begin{center}
        \scalebox{0.5}{\includegraphics{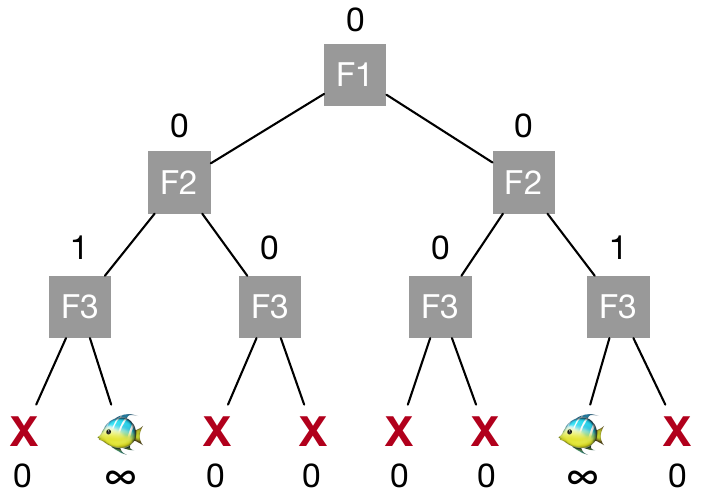}}
        \caption{Responsibility index $\rho(u)$}
       \label{intro-lake-rho subfigure}
        \end{center}
    \end{subfigure}
 
\caption{Two examples of ranking functions for the mechanism in Figure~\ref{intro-lakes-tree subfigure}.}
\end{figure}

\begin{definition}\label{ranking function}
A {\bf\em ranking function} $\phi$ on a mechanism  $(V,E,Q,H,A,\{C_a\}_{a\in A})$ is an arbitrary function that maps each node $u\in V$ into an \underline{extended} natural number $\phi(u)$ such that
\begin{enumerate}
    \item $\phi(u)=0$ for each harmful sink node $u\in H$,
    \item $\phi(u)=\infty$ for each non-harmful sink node $u\in Sink\setminus H$,
    \item for all nodes $u,v\in V$, if $(u,v)\in E$, then $\phi(u)\le \phi(v)+1$,
    \item for each decision node $u\in V\setminus Sink$, there is a node $v\in V$ such that $(u,v)\in E$ and $\phi(u)\le \phi(v)$.
\end{enumerate}
\end{definition}
Figure~\ref{intro-lakes-phi subfigure} and Figure~\ref{intro-lake-rho subfigure} show two examples of ranking functions for the mechanism shown in Figure~\ref{intro-lakes-tree subfigure}. In these figures, the value $\phi(u)$ of a ranking function for a node $u$ is displayed next to the node. 

\begin{lemma}\label{structurally equivalent - ranking function}
Any two structurally equivalent mechanisms have the same set of ranking functions.    
\end{lemma}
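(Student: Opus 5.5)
The plan is to verify directly that Definition~\ref{ranking function} refers only to the components of a mechanism that are shared by structurally equivalent mechanisms. Let $M=(V,E,Q,H,A,\{C_a\}_{a\in A})$ and $M'=(V',E',Q',H',A',\{C'_a\}_{a\in A'})$ be structurally equivalent. By Definition~\ref{structurally equivalent}, $V=V'$, $E=E'$, $Q=Q'$, and $H=H'$. The first step is to observe that the set of sink nodes is determined by $V$ and $E$ alone, since $\textit{Sink}_M$ is the set of nodes of $V$ without outgoing edges in $E$. Hence $\textit{Sink}_M=\textit{Sink}_{M'}$, and consequently the set of decision nodes $V\setminus\textit{Sink}_M$ coincides with $V'\setminus\textit{Sink}_{M'}$.

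Next, I will go through the four conditions of Definition~\ref{ranking function} one at a time for an arbitrary function $\phi$ from $V$ to the extended natural numbers. Condition~1 mentions only $H$. Condition~2 mentions only $\textit{Sink}\setminus H$. Condition~3 mentions only $V$ and $E$. Condition~4 mentions only $V$, $E$, and $V\setminus\textit{Sink}$. None of them involves $A$ or the control sets $\{C_a\}_{a\in A}$, and none involves $Q$ either. By the previous paragraph, each condition therefore holds for $\phi$ with respect to $M$ if and only if it holds with respect to $M'$. So $\phi$ is a ranking function on $M$ exactly when it is a ranking function on $M'$, and the two sets of ranking functions are equal. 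The argument is symmetric, so both inclusions come out of the same check.

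I expect no real obstacle here. The only point needing a moment's care is the implicit dependence of \emph{Sink} on the mechanism, which is handled by the first step.
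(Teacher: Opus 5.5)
Your proposal is correct and follows the same approach as the paper, which notes that Definition~\ref{ranking function} never refers to the control sets. Your explicit remark that $\textit{Sink}$ is determined by $V$ and $E$ alone is a useful detail the paper leaves implicit.
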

\begin{proof}
The statement of the lemma follows from Definition~\ref{structurally equivalent} and Definition~\ref{ranking function} because the latter definition does not refer to the family of control sets $\{C_a\}_{a\in A}$.    
\end{proof}

\begin{definition}\label{count}
For a given mechanism, the {\bf\em responsibility count} of a decision path that terminates at a harmful sink is the number of agents responsible along the path.
\end{definition}
For the mechanism depicted in Figure~\ref{intro-lakes-tree subfigure}, the responsibility count of the decision path $v_2,v_4,u_1$ is 1. This is because the only agent responsible along this path is agent F3 (it has a strategy to avoid harm at node $v_4$). 

\begin{definition}\label{index}
For a given mechanism, the {\bf\em responsibility index} $\rho(u)$ of a node $u$ is the minimum responsibility count among all decision paths from $u$ to a harmful sink node.      
\end{definition}
Of course, there might be no decision paths from node $u$ to a harmful sink node. In this case, $\rho(u)=\infty$ because of our assumption in the preamble to this section that the minimal element of the empty set is $\infty$. Figure~\ref{intro-lake-rho subfigure} shows the index $\rho(u)$ for each node $u$ of the mechanism from Figure~\ref{intro-lakes-tree subfigure}. As we stated earlier, the function $\rho$ in Figure~\ref{intro-lake-rho subfigure} is a ranking function. This is not a coincidence. As we show in Lemma~\ref{index is ranking}, the responsibility index of any decision mechanism is a ranking function. The four lemmas below prove each of the four properties of a ranking function from Definition~\ref{ranking function} separately.

\begin{lemma}
    $\rho(u)=0$ for each harmful sink node $u\in H$ of a mechanism $(V,E,Q,H,A,\{C_a\}_{a\in A})$.
\end{lemma}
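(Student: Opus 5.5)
The plan is to exhibit a single decision path from $u$ to a harmful sink whose responsibility count is $0$, and then use that $0$ is the least extended natural number. Because $\rho(u)$ is by Definition~\ref{index} the minimum of responsibility counts over all decision paths from $u$ to a harmful sink node, it is enough to find one such path with count $0$.

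The natural candidate is the one-element decision path $u_1$ with $u_1=u$. First, it is a legitimate decision path, since the definition of a decision path only requires $n\ge 1$. Second, it starts at $u$ and terminates at $u\in H$, so it lies in the set over which the minimum in Definition~\ref{index} is taken; in particular that set is nonempty and $\rho(u)\neq\infty$.

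Next I compute the responsibility count of this path. By Definition~\ref{responsible}, an agent $a$ is responsible along the path only if it has a strategy that avoids harm from some node on the path. The only such node is $u$, a harmful sink node, and Lemma~\ref{no strategy from harm} says no agent has a strategy that avoids harm from a harmful sink node. Hence no agent is responsible along the path, and by Definition~\ref{count} its responsibility count is $0$.

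It follows that $\rho(u)\le 0$, and since responsibility counts are non-negative, $\rho(u)=0$. I do not anticipate a real obstacle. The only step needing care is confirming that the trivial single-node path counts as a decision path from $u$ to a harmful sink, which the $n\ge1$ convention guarantees.
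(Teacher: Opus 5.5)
Your proposal is correct and follows the paper's proof essentially step for step. Both use the single-element path $u$, apply Lemma~\ref{no strategy from harm} to conclude that no agent is responsible along it, and read off $\rho(u)=0$. Your explicit remark that counts are non-negative, which upgrades $\rho(u)\le 0$ to equality, is a detail the paper leaves implicit.
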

\begin{proof}
Consider the single-element decision path $u$. By Lemma~\ref{no strategy from harm}, no agent can have a strategy to avoid harm from any of the nodes along this path. Hence, by Definition~\ref{responsible}, no agent is responsible along this path. Thus, by Definition~\ref{count}, the responsibility count of this path is zero. Therefore, $\rho(u)=0$ by Definition~\ref{index}.
\end{proof}

\begin{lemma}
    $\rho(u)=\infty$ for each non-harmful sink node $u\in \textit{Sink}\setminus H$ of a mechanism $(V,E,Q,H,A,\{C_a\}_{a\in A})$.
\end{lemma}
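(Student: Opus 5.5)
The plan is to show that a non-harmful sink node $u\in \textit{Sink}\setminus H$ admits no decision path to a harmful sink node at all. Then $\rho(u)$ is the minimum of the empty set, which equals $\infty$ by the convention fixed in the preamble to this section.

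First, take any decision path $u_1,\dots,u_n$ with $u_1=u$. If $n\ge 2$, then $(u_1,u_2)\in E$, so $u$ has an outgoing edge. This contradicts $u\in\textit{Sink}$, since by Definition~\ref{transition system} sink nodes have no outgoing edges. Hence $n=1$, and the only decision path starting at $u$ is the single-element path $u$.

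Second, this single-element path terminates at $u$ itself. Since $u\notin H$, it is not a path to a harmful sink node. So the set of decision paths from $u$ to a harmful sink node is empty, and the set of responsibility counts in Definition~\ref{index} is empty as well. Therefore $\rho(u)=\infty$ by the convention that the minimal element of the empty set is $\infty$.

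I do not expect a real obstacle here. The only point needing care is the observation that a path from a sink node cannot have length greater than one, which follows immediately from the definition of $\textit{Sink}_M$.
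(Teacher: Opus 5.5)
Your proof is correct and follows the paper's argument: the paper also observes that there is no decision path from $u$ to a harmful sink and concludes $\rho(u)=\min\varnothing=\infty$. You additionally spell out the step the paper leaves implicit, namely that the only decision path starting at a sink node is the single-element path $u$, and it does not end in $H$.
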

\begin{proof}
The assumption $u\in \textit{Sink}\setminus H$ implies that there are no paths from $u$ to a harmful sink. Therefore, $\rho(u)=\min\varnothing=\infty$, by Definition~\ref{index}.  
\end{proof}

\begin{lemma}
For any nodes $u,v\in V$ of a decision mechanism $(V,E,Q,H,A,\{C_a\}_{a\in A})$,
if $(u,v)\in E$, then $\rho(u)\le \rho(v)+1$.  
\end{lemma}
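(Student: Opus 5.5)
We need to show $\rho(u)\le\rho(v)+1$ whenever $(u,v)\in E$. If $\rho(v)=\infty$, the inequality is trivial, so the plan is to assume $\rho(v)$ is finite. By Definition~\ref{index}, there is then a decision path $v=u_1,\dots,u_n$ with $u_n\in H$ whose responsibility count equals $\rho(v)$. We prepend $u$ to get the decision path $u,u_1,\dots,u_n$ from $u$ to the harmful sink $u_n$. By Definition~\ref{index}, $\rho(u)$ is at most the responsibility count of this extended path, so it suffices to show that this count is at most $\rho(v)+1$.

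To bound the count, we compare the sets of responsible agents along the two paths using Definition~\ref{responsible}. An agent is responsible along the extended path iff it has a strategy that avoids harm from some node of the extended path. Such a node is either one of $u_1,\dots,u_n$ or the new node $u$. Every agent that has a strategy avoiding harm from some $u_i$ is already responsible along the original path. Hence the only new responsible agents are those having a strategy that avoids harm from $u$ itself.

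The key step is to show that at most one agent that is not already responsible along the original path can have a strategy avoiding harm from $u$. Suppose an agent $a$ has a strategy $s$ avoiding harm from $u$.
\begin{itemize}
\item If $u\notin C_a$, then by Lemma~\ref{power transition lemma} the same $s$ avoids harm from $v=u_1$, so $a$ is already responsible along the original path.
\item If $u\in C_a$, then $a$ is the unique agent controlling $u$, since the control sets $\{C_a\}_{a\in A}$ are disjoint.
\end{itemize}
So at most one new agent appears, namely the controller of $u$ if it exists. Therefore the responsibility count of the extended path is at most $\rho(v)+1$, which gives $\rho(u)\le\rho(v)+1$.

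The main obstacle is this last case analysis: Lemma~\ref{power transition lemma} handles the non-controllers, and disjointness of the control sets caps the controllers of $u$ at one. A minor point to check is that finiteness is used only to pick a path realizing the minimum. This is fine, since the set of agents along any path is a subset of the finite node-based structure, and responsibility counts are natural numbers, so the minimum is attained.
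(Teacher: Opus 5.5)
Your proof is correct and follows the paper's argument. Both dispose of the case $\rho(v)=\infty$ trivially, take a path from $v$ to a harmful sink with minimal responsibility count, prepend $u$, and use Lemma~\ref{power transition lemma} to bound the extended path's count by $\rho(v)+1$, with you also spelling out the step the paper leaves implicit: disjointness of the control sets means only the unique controller of $u$ can be a newly responsible agent.
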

\begin{proof} We consider the following two cases separately.

\noindent{\em Case I:} there is no decision path from node $v$ to a harmful sink node. Thus, $\rho(v)=\min\varnothing=\infty$ by Definition~\ref{index}. Therefore,  $\rho(u)\le \infty =\infty+1=\rho(v)+1$. 

\noindent{\em Case II:} there is at least one decision path from node $v$ to a harmful sink node. Consider a decision path $u_1,u_2,\dots,u_n$ from node $u_1=v$ to a harmful sink node $u_n\in H$ with minimal responsibility count. Such a decision path does not have to be unique. Let $k$ be the responsibility count of this decision path. Thus, $\rho(v)=k$ by Definition~\ref{index}. By Lemma~\ref{power transition lemma}, the responsibility count of the path $u,u_1,\dots,u_n$ is at most $k+1$. Thus, $\rho(u)\le k+1$ by Definition~\ref{index}. Therefore, $\rho(u)\le \rho(v)+1$.
\end{proof}

\begin{lemma}
For any non-sink node $u\in V\setminus \textit{Sink}$ of a decision mechanism $(V,E,Q,H,A,\{C_a\}_{a\in A})$, there is node $v\in V$ such that $(u,v)\in E$ and $\rho(u)\le \rho(v)$.     
\end{lemma}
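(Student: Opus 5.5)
The plan is to choose the successor $v$ carefully, depending on who controls $u$. Then we show that for \emph{every} decision path $p=v,u_2,\dots,u_n$ from $v$ to a harmful sink, the extended path $u,v,u_2,\dots,u_n$ has the same responsibility count as $p$. Granting this, Definition~\ref{index} gives $\rho(u)\le$ the count of $u,v,u_2,\dots,u_n$ $=$ the count of $p$ for every such $p$. Taking the minimum over $p$ yields $\rho(u)\le\rho(v)$. This also covers the case where there is no path from $v$ to a harmful sink, since then $\rho(v)=\infty$ and the inequality is trivial. Such a successor $v$ exists because $u$ is a non-sink node.

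First observe that every agent responsible along $p$ is responsible along the extended path, since every node of $p$ lies on the extended path. So it suffices to show the converse inclusion. An agent $a$ responsible along the extended path but not along $p$ must have a strategy $s$ that avoids harm from $u$ itself. If $u\notin C_a$, then Lemma~\ref{power transition lemma} shows that $s$ also avoids harm from $v$. Hence $a$ is already responsible along $p$. Thus the only possible extra agent is the controller of $u$, if there is one.

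This leads to a case split on the choice of $v$.
\begin{itemize}
\item If $u$ belongs to no control set, or $u\in C_b$ but agent $b$ has no strategy avoiding harm from $u$, let $v$ be any successor of $u$. In this case there is no extra agent, so the counts agree.
\item If $u\in C_b$ and $b$ has a strategy $s$ avoiding harm from $u$, let $v=s(u)$. This is a successor of $u$ by the definition of a strategy. By Lemma~\ref{second power transition lemma}, $s$ avoids harm from $v$. Hence $b$ is responsible along every path starting at $v$, so $b$ is not an extra agent either.
\end{itemize}
The control sets are disjoint, so $u$ has at most one controller and the cases are exhaustive.

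The main subtlety, and the step to get right, is the direction of the inequality. Prefixing a path with $u$ can only increase the responsibility count, which by itself gives the wrong direction. This is why $v$ cannot be taken as the second node of an arbitrary minimal path from $u$. Instead it must be chosen via the controller's harm-avoiding strategy, as in the second case.
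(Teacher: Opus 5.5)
Your proof is correct and follows essentially the same route as the paper. The paper first proves a claim that some successor $v$ of $u$ inherits every harm-avoiding strategy from $u$: any successor works when the controller of $u$ has no harm-avoiding strategy from $u$, and $v=s(u)$ otherwise, via Lemma~\ref{power transition lemma} and Lemma~\ref{second power transition lemma}. It then prefixes a minimal-count path from $v$ with $u$, which matches your argument except that you show count equality for every path from $v$, an equivalent formulation.
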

\begin{proof} We start the proof with the following claim:
\begin{claim}
There is a node $v\in V$ such that $(u,v)\in E$ and any agent that has a strategy to avoid harm from node $u$ also has a strategy to avoid harm from $v$.    
\end{claim}
\begin{proof-of-claim}
We consider the following two cases separately:

\noindent{\em Case 1:} There is no agent $a\in A$ such that $u\in C_a$ and agent $a$ has a strategy to avoid harm from node $u$. The assumption $u\in V\setminus \textit{Sink}$ implies that there is at least one node $v\in V$ such that $(u,v)\in E$. Then, the statement of the claim follows from Lemma~\ref{power transition lemma}.   

\noindent{\em Case 2:}  There is at least one agent $a\in A$ such that $u\in C_a$ and agent $a$ has a strategy to avoid harm from node $u$. There could be only one such agent $a$ because the family $\{C_a\}_{a\in A}$ is disjoint by item~5 of Definition~\ref{transition system}. Call this agent $a_0$ and the strategy $s$. Hence, by Lemma~\ref{second power transition lemma}, agent $a_0$ has a strategy to avoid harm from node $v=s(u)$. Furthermore, by Lemma~\ref{power transition lemma}, any agent $a\neq a_0$ that has a strategy to avoid harm from node $u$ also has a strategy to avoid harm from node $v$. 
\end{proof-of-claim}

To finish the proof of the lemma, suppose that $v$ is the node whose existence is proven in the above claim. If there is no decision path from node $v$ to a harmful sink, then $\rho(v)=\infty$ by Definition~\ref{index}. Therefore, $\rho(u)\le \rho(v)$.

Assume now that there is at least one decision path from $v$ to a harmful sink node. Let $u_1,\dots,u_n$, where $u_1=v$ and $u_n\in H$, be such a path with the minimal responsibility count. Let $k$ be the responsibility count of this decision path. Thus, $\rho(v)=k$ by Definition~\ref{index}. By the above claim, Definition~\ref{responsible}, and Definition~\ref{count}, the responsibility count of the path $u,u_1,\dots,u_n$ is also $k$. Therefore, $\rho(u)\le k=\rho(v)$ by Definition~\ref{index}.
This concludes the proof of the lemma.
\end{proof}

The next lemma follows from Definition~\ref{ranking function} and the four previous lemmas.
\begin{lemma}\label{index is ranking}
Responsibility index on an arbitrary mechanism is a ranking function.    
\end{lemma}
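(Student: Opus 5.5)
The statement follows by matching the four clauses of Definition~\ref{ranking function} against the four lemmas proved just above. The plan is to fix an arbitrary mechanism $(V,E,Q,H,A,\{C_a\}_{a\in A})$ and take $\phi=\rho$, where $\rho$ is the responsibility index from Definition~\ref{index}. We then check each condition of Definition~\ref{ranking function} in turn.

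First, we confirm that $\rho$ has the right codomain. The responsibility count of any path is the number of agents responsible along it, and this is a non-negative integer. Hence $\rho(u)$ is either the minimum of a nonempty set of non-negative integers, or $\min\varnothing=\infty$ by the convention fixed at the start of this section. In either case $\rho(u)$ is an extended natural number.

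Second, we discharge the four conditions in order. Item~1 ($\rho(u)=0$ for $u\in H$) is the first of the four preceding lemmas, which rests on Lemma~\ref{no strategy from harm}. Item~2 ($\rho(u)=\infty$ for $u\in\textit{Sink}\setminus H$) is the second lemma. Item~3 ($\rho(u)\le\rho(v)+1$ whenever $(u,v)\in E$) is the third lemma, proved via Lemma~\ref{power transition lemma}. Item~4 (each decision node has a successor $v$ with $\rho(u)\le\rho(v)$) is the fourth lemma, whose claim uses Lemma~\ref{second power transition lemma} together with the disjointness of the control sets.

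There is no genuine obstacle here, since all the substantive work was done in the four preceding lemmas. The only point that needs care is the arithmetic with $\infty$ in items~3 and~4. This is already handled by the conventions $\infty+1=\infty$ and that $\infty$ exceeds every non-negative integer, so the inequalities remain valid when $\rho(v)=\infty$. Combining these observations, $\rho$ satisfies Definition~\ref{ranking function} and is therefore a ranking function.
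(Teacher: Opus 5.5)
Your proposal is correct and follows the same approach as the paper, which also derives the lemma directly from Definition~\ref{ranking function} and the four preceding lemmas that establish items~1--4 for $\rho$. Your codomain check is a harmless extra step that the paper leaves implicit. Strictly, it relies on the responsibility count being finite even though $A$ need not be. This holds because an agent with an empty control set is never responsible along a path ending in $H$, and at most $|V|$ agents have nonempty control sets.
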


The proof of the next lemma contains the main construction of this article. It shows how to build a mechanism with just $k$ agents that ensures $k$-oversight.

\begin{lemma}\label{construction lemma}
For any ranking function $\phi$ on a mechanism $M=(V,E,Q,H,A,\{C_a\}_{a\in A})$ and any integer $k\ge 1$, if $\phi(q)\ge k$ for each initial node $q\in Q$, then there is a mechanism structurally equivalent to $M$ that has only $k$ agents and ensures $k$-oversight.  
\end{lemma}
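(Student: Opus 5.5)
The plan is to let the ranking function $\phi$ itself dictate who controls what, creating $k$ fresh agents $a_1,\dots,a_k$. Specifically, I would define the new mechanism $M'=(V,E,Q,H,\{a_1,\dots,a_k\},\{C'_{a_j}\}_{j\le k})$ with
$$C'_{a_j}=\{u\in V \mid \phi(u)=j\}\qquad (1\le j\le k).$$
Nodes whose rank is $0$, larger than $k$, or $\infty$ are left uncontrolled; Definition~\ref{transition system} does not require the control sets to cover all decision nodes.

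First I would check that $M'$ is a legitimate mechanism structurally equivalent to $M$. Structural equivalence is immediate, since only the agents and control sets change. The sets $C'_{a_j}$ are pairwise disjoint because $\phi$ is a function. Each $C'_{a_j}$ contains only decision nodes, because by items 1 and 2 of Definition~\ref{ranking function} every sink has rank $0$ or $\infty$, whereas $j\ge 1$ is finite. By Lemma~\ref{structurally equivalent - ranking function}, $\phi$ is still a ranking function on $M'$.

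Next, for each $j$, I would define a strategy $s_j$ of $a_j$ using item 4 of Definition~\ref{ranking function}: for $u\in C'_{a_j}$, let $s_j(u)$ be some successor $v$ with $\phi(v)\ge\phi(u)=j$. The key claim is that $s_j$ avoids harm from every node $w$ with $\phi(w)\ge j$. To see this, take any decision path $u_1,\dots,u_n$ with $u_1=w$ and $u_n\in H$, so $\phi(u_1)\ge j$ and $\phi(u_n)=0<j$. Let $i<n$ be the largest index with $\phi(u_i)\ge j$. Then $\phi(u_{i+1})<j$, and item 3 gives $\phi(u_i)\le\phi(u_{i+1})+1\le j$, so $\phi(u_i)=j$. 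Hence $u_i\in C'_{a_j}$. Moreover $u_{i+1}\ne s_j(u_i)$, since $\phi(s_j(u_i))\ge j>\phi(u_{i+1})$. This is exactly the condition of Definition~\ref{avoids harm from a node}.

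Finally, consider any decision path from an initial node $q\in Q$ to a harmful sink. Its first node satisfies $\phi(q)\ge k\ge j$ for every $j\le k$, so by the claim each of the $k$ distinct agents $a_j$ has a strategy avoiding harm from $q$. By Definition~\ref{responsible}, each $a_j$ is therefore responsible along the path, so $M'$ ensures $k$-oversight by Definition~\ref{oversight}.

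The only delicate step is the ``last node with rank at least $j$'' argument. It relies on item 3 of Definition~\ref{ranking function}, which forces ranks to drop by at most one per edge, so that the crossing from rank $\ge j$ to rank $<j$ must occur exactly at a node of rank $j$, that is, at a node controlled by $a_j$. The argument must also work for paths that revisit nodes in cyclic graphs. It does, because it only uses the last index with rank at least $j$ and never assumes the path is simple.
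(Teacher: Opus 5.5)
Your proposal is correct and follows essentially the same route as the paper. Both use control sets $C'_j=\{u\mid \phi(u)=j\}$ for $j\le k$ and strategies chosen via item~4 of Definition~\ref{ranking function}. Both then argue that item~3 forces every path from rank $\ge j$ down to rank $0$ to leave a node of rank exactly $j$ along an edge that $s_j$ does not choose.

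Your explicit check that these control sets contain only decision nodes (via items~1 and~2) is a small detail the paper leaves implicit. It is also what licenses the use of item~4 when defining $s_j$.
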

\begin{proof}
Let $A'=\{1,\dots,k\}$ and, for each agent $a\in A'$,  
\begin{equation}\label{18-aug-a}
    C'_a=\{u\in V\mid \phi(u)=a\}.
\end{equation}
It suffices to show that the mechanism $(V,E,Q,H,A',\{C'_a\}_{a\in A'})$ ensures $k$-oversight. By Definition~\ref{oversight} and Definition~\ref{responsible}, it is enough to show that each agent $a\in A'$ has a strategy to avoid harm from each initial node $q\in Q$. 

Consider an arbitrary agent $a\in A'$. For each node $u\in C'_a$, define $s(u)$ to be any node $v$ such that $(u,v)\in E$ and $\phi(u)\le \phi(v)$. Such a node $v$ exists by item~4 of Definition~\ref{ranking function}. Then, for each $u\in C'_a$,
\begin{equation}\label{18-aug-b}
    \phi(u)\le \phi(s(u)).
\end{equation}

To finish the proof of the lemma, it suffices to establish that strategy $s$ of agent $a$ avoids harm from each initial node. 

Consider any decision path $u_1,\dots,u_n$ from an initial node $u_1\in Q$ to a harmful sink $u_n\in H$. By Definition~\ref{avoids harm from a node}, it is sufficient to prove that there is $i<n$ such that $u_i\in C'_a$ and $u_{i+1}\neq s(u_i)$. Indeed, $\phi(u_1)\ge k$ by the assumption of the lemma. Hence, $\phi(u_1)\ge a$ because $a\in A'=\{1,\dots,k\}$. At the same time, $\phi(u_n)=0$ by item~1 of Definition~\ref{ranking function}. Thus, by item~3 of Definition~\ref{ranking function}, there must exist $i<n$ such that $\phi(u_i)=a$ and $\phi(u_{i+1})=a-1$. Hence, $u_i\in C'_a$ by statement~\eqref{18-aug-a}. Also, $u_{i+1}\neq s(u_i)$ by statement~\eqref{18-aug-b}. 
\end{proof}

The next theorem states the first main result of this article.

\begin{theorem}\label{k-oversight theorem}
For any mechanism that ensures $k$-oversight, there is a structurally equivalent mechanism with just $k$ agents that also ensures $k$-oversight.  
\end{theorem}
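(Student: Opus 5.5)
The plan is to combine Lemma~\ref{index is ranking} with Lemma~\ref{construction lemma}, taking the responsibility index $\rho$ of the given mechanism as the ranking function. Let $M=(V,E,Q,H,A,\{C_a\}_{a\in A})$ ensure $k$-oversight. By Lemma~\ref{index is ranking}, the responsibility index $\rho$ of $M$ is a ranking function on $M$. Once we show that $\rho(q)\ge k$ for every initial node $q\in Q$, Lemma~\ref{construction lemma} applied with $\phi=\rho$ directly yields a structurally equivalent mechanism with exactly $k$ agents that ensures $k$-oversight.

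The only remaining step is to verify the hypothesis $\rho(q)\ge k$ for each $q\in Q$, and I would split it into two cases.

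\begin{itemize}
\item If there is no decision path from $q$ to a harmful sink, then $\rho(q)=\min\varnothing=\infty\ge k$.
\item Otherwise, by Definition~\ref{index}, $\rho(q)$ is the minimum responsibility count over all decision paths from $q$ to a harmful sink node. Each such path starts at an initial node, so by Definition~\ref{oversight} and the assumption that $M$ ensures $k$-oversight, at least $k$ distinct agents are responsible along it. By Definition~\ref{count}, its responsibility count is therefore at least $k$, and hence so is the minimum.
\end{itemize}

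I expect no genuine obstacle, since all the substantive work is already done in the preceding lemmas. The one point I would check is the degenerate situation in which $k$ exceeds the number of agents of $M$. Then $k$-oversight can only hold vacuously, with no harmful sink reachable from $Q$. In that case $\rho\equiv\infty$ on $Q$, and the argument still goes through unchanged. Also, $k\ge 1$ is assumed by Definition~\ref{oversight}, as Lemma~\ref{construction lemma} requires.
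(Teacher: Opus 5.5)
Your proposal is correct and follows the paper's proof: you derive $\rho(q)\ge k$ for every initial node $q$ from the definitions of oversight, responsibility count and responsibility index, and then apply Lemma~\ref{construction lemma} with $\phi=\rho$. That lemma applies because $\rho$ is a ranking function by Lemma~\ref{index is ranking}.
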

\begin{proof}
If a mechanism ensures $k$-oversight, then $\rho(u)\ge k$ for each initial node $u$ of the mechanism by Definition~\ref{oversight}, Definition~\ref{responsible}, and Definition~\ref{index}.
Then, the statement of the theorem follows from Lemma~\ref{index is ranking} and Lemma~\ref{construction lemma}.   
\end{proof}

\section{Second Result: Mechanism Ensuring $k$-oversight is Feasibly Computable}

In this section, we present our second result: a polynomial-time algorithm that, for any given mechanism, either constructs a structurally equivalent mechanism ensuring $k$-oversight or determines that such a mechanism does not exist. To construct such an algorithm, let us first introduce the notion of a closure $X^+$ of a set of nodes $X$. Intuitively, $X^+$ is the set of nodes from which the mechanism will inevitably reach a node in the set $X$. For the mechanism depicted in Figure~\ref{intro-lakes-tree subfigure}, if $X=\{v_4,u_2,u_3,v_3\}$, then $X^+=X\cup \{v_1,v_2,v_5\}$. Formally, the set $X^+$ is defined recursively:

\begin{definition}\label{plus}
For any set $X\subseteq V$ of nodes of a mechanism $M=(V,E,Q,H,A,\{C_a\}_{a\in A})$, 
let $X^+_M$ be the minimal set of nodes such that
\begin{enumerate}
    \item $X\subseteq X^+_M$,
    \item if $u\notin \textit{Sink}$ and $\{v\in V\mid (u,v)\in E\}\subseteq X^+_M$, then $u\in X^+_M$.
\end{enumerate}
\end{definition}
We omit the subscript in the expression $X^+_M$ when its value is clear from the context. The next two lemmas follow from the above definition.

\begin{lemma}\label{plus no sinks}
For any mechanism $(V,E,Q,H,A,\{C_a\}_{a\in A})$, sink node $u\in Sink$, and set $X\subseteq V$ of nodes, if $u\in X^+$, then $u\in X$.     
\end{lemma}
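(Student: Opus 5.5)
The plan is to exploit the minimality of $X^+$ in Definition~\ref{plus}. Consider the candidate set $Y = X^+ \setminus (\textit{Sink}\setminus X)$, obtained from $X^+$ by deleting every sink node that is not in $X$. It suffices to show that $Y$ satisfies both closure conditions of Definition~\ref{plus}. Minimality then gives $X^+\subseteq Y$. Hence no sink node outside $X$ belongs to $X^+$, which is exactly the claim: if $u\in\textit{Sink}$ and $u\in X^+$, then $u\in X$.

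The verification has two parts.

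\emph{Condition 1.} We need $X\subseteq Y$. This holds because $X\subseteq X^+$, and the removed nodes all lie outside $X$.

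\emph{Condition 2.} Take a node $u\notin\textit{Sink}$ all of whose successors lie in $Y$. Since $Y\subseteq X^+$, all successors lie in $X^+$, so $u\in X^+$ by condition~2 applied to $X^+$. Because $u$ is not a sink, it was not removed, and so $u\in Y$.

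The only point needing care is that the minimal set in Definition~\ref{plus} is well defined, i.e., that it is the intersection of all sets satisfying both conditions. This is implicit in the definition: the intersection of such sets again satisfies both conditions, because each condition is preserved under intersection. So "minimal" means contained in every such set, which is what the argument uses.

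Alternatively, one can phrase the argument via the inductive characterization: $X^+$ is the union of stages $X_0=X$, $X_{j+1}=X_j\cup\{u\notin\textit{Sink}\mid \text{all successors of } u \text{ are in } X_j\}$. By induction on $j$, only non-sink nodes are ever added. I expect no real obstacle here; the statement is essentially immediate from the side condition $u\notin\textit{Sink}$ in item~2.
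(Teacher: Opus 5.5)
Your proof is correct and matches the paper's intent. The paper gives no explicit proof and only remarks that the lemma follows from Definition~\ref{plus}. Your minimality argument with $Y=X^+\setminus(\textit{Sink}\setminus X)$, or equivalently your stage-by-stage induction (the same ``induction on the recursive construction'' the paper uses in Lemma~\ref{phi plus}), supplies exactly the omitted detail.
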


\begin{lemma}\label{non-plus child}
For any mechanism $(V,E,Q,H,A,\{C_a\}_{a\in A})$, decision node $u\in V\setminus Sink$, and set $X\subseteq V$ of nodes such that $u\notin X^+$, there is a node $v\in V$ such that $(u,v)\in E$ and  $v\notin X^+$.  
\end{lemma}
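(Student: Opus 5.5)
We argue by contraposition, using only the minimality of $X^+$ from Definition~\ref{plus}. Suppose, towards a contradiction, that $u\in V\setminus Sink$, $u\notin X^+$, and every node $v$ with $(u,v)\in E$ belongs to $X^+$. Since $u$ is not a sink, the hypothesis of item~2 of Definition~\ref{plus} is satisfied for $u$: $u\notin\textit{Sink}$ and $\{v\in V\mid (u,v)\in E\}\subseteq X^+$. Item~2 then forces $u\in X^+$, which contradicts the assumption $u\notin X^+$. Hence some successor $v$ of $u$ lies outside $X^+$.

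The key step is to confirm that $X^+$ itself satisfies closure condition~2, and not merely that each of its elements is produced by the rules. To secure this, we read ``the minimal set of nodes such that'' conditions~1 and~2 hold as the intersection of all subsets of $V$ satisfying both conditions. This family is nonempty, because $V$ itself satisfies them.

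We then check that the intersection satisfies the two conditions. Condition~1 holds because each member of the family contains $X$. For condition~2, suppose $u\notin\textit{Sink}$ and every successor of $u$ lies in the intersection. Then every successor of $u$ lies in each member $Y$ of the family, and applying condition~2 to each $Y$ gives $u\in Y$. Hence $u$ is in the intersection. So $X^+$ is well defined and is itself closed under condition~2, which justifies the step used in the first paragraph.
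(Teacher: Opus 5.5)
Your proposal is correct and matches the paper, which states this lemma without proof as following directly from Definition~\ref{plus}: if every successor of the non-sink node $u$ were in $X^+$, item~2 would force $u\in X^+$. Your additional check, reading the minimal set as an intersection and verifying that $X^+$ is itself closed under item~2, just makes explicit the well-definedness the paper leaves implicit.
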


The next lemma captures a connection between the closure operation $X^+$ and ranking functions.
\begin{lemma}\label{phi plus}
For any mechanism $(V,E,Q,H,A,\{C_a\}_{a\in A})$, ranking function $\phi$, set $X\subseteq V$ of nodes, and node $u\in X^+$,
$$
\phi(u)\le \max_{v\in X}\phi(v).
$$
\end{lemma}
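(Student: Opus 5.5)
Since $X^+$ is defined as the \emph{minimal} set satisfying the two closure conditions of Definition~\ref{plus}, I will prove the lemma by showing that a suitable candidate set satisfies those conditions; minimality then gives the required inclusion. Let $m=\max_{v\in X}\phi(v)$. Here $m$ is an extended natural number, and if $X=\varnothing$ I take $m=0$, or handle this case separately. Define
$$
Y=\{u\in V\mid \phi(u)\le m\}.
$$
It suffices to check that $Y$ satisfies both items of Definition~\ref{plus}. Minimality of $X^+$ then yields $X^+\subseteq Y$, which is exactly the statement of the lemma.

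Item~1 is immediate: every $v\in X$ satisfies $\phi(v)\le m$ by the definition of $m$, so $X\subseteq Y$.

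Item~2 is the only step with content. Let $u\notin \textit{Sink}$ be a node all of whose successors lie in $Y$. By item~4 of Definition~\ref{ranking function}, there is some $v$ with $(u,v)\in E$ and $\phi(u)\le\phi(v)$. Since $v\in Y$, we get $\phi(u)\le\phi(v)\le m$, so $u\in Y$. Only item~4 of the ranking function definition is used here; item~3 is not needed.

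The one point that needs care is the edge case $X=\varnothing$, where $\max\varnothing$ is not fixed by the paper's conventions. In that case I will show directly that $\varnothing^+=\varnothing$: the empty set already satisfies both items of Definition~\ref{plus}, because any non-sink node has at least one successor, and that successor cannot lie in $\varnothing$. Hence $\varnothing^+=\varnothing$ and the claim holds vacuously. An alternative for readers who prefer avoiding minimality is induction on the stage at which $u$ enters $X^+$ in the iterative construction, but the minimality argument above is cleaner.
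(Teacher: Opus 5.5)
Your proof is correct and is essentially the paper's argument. The paper proves the lemma by induction on the recursive construction of $X^+$, with base case $u\in X$ and an inductive step via item~4 of Definition~\ref{ranking function}; this is exactly your check, phrased through minimality, that $Y=\{u\in V\mid \phi(u)\le m\}$ is closed under the two rules of Definition~\ref{plus}, and your handling of $X=\varnothing$ (showing $\varnothing^+=\varnothing$) covers a case the paper leaves implicit.
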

\begin{proof}
We prove the statement of the lemma by induction on the recursive construction of the set $X^+$ specified in Definition~\ref{plus}.

\vspace{1mm}\noindent{\em Base Case}: $u\in X$. Then, $\phi(u)\le\max_{v\in X}\phi(v)$.

\vspace{1mm}\noindent{\em Induction Step}: $u\notin \textit{Sink}$ and $\{v\in V\mid (u,v)\in E\}\subseteq X^+$. Then, $\phi(v)\le\max_{w\in X}\phi(w)$ for each $v\in V$ such that $(u,v)\in E$ by the induction hypothesis. At the same time, by item~4 of Definition~\ref{ranking function}, the assumption $u\notin \textit{Sink}$ implies that there is a node $v_0\in V$ such that $(u,v_0)\in E$ and $\phi(u)\le \phi(v_0)$. Therefore,
$\phi(u)\le \phi(v_0)\le\max_{w\in X}\phi(w)$.
\end{proof}

\begin{definition}\label{tl}
For any mechanism $M=(V,E,Q,H,A,\{C_a\}_{a\in A})$ and set $X\subseteq V$ of nodes, let $tl_M(X)$ be the set of all nodes $u\in V$ for which there is a node $v\in X$ such that $(u,v)\in E$.  
\end{definition}
We omit the subscript in the expression $tl_M(X)$ when its value is clear from the context. In the setting of Figure~\ref{intro-lakes-tree subfigure}, we have $tl(\{u_1,v_4\})=\{v_4,v_2\}$.
The next lemma follows from the above definition.

\begin{lemma}\label{tl no sinks}
For any mechanism $(V,E,Q,H,A,\{C_a\}_{a\in A})$, sink node $u\in Sink$, and set $X\subseteq V$ of nodes,
$u\notin tl(X)$.
\end{lemma}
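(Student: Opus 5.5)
This statement is immediate from Definition~\ref{tl} and the definition of a sink node in Definition~\ref{transition system}. I will argue by contradiction. Suppose that $u\in tl(X)$ for some sink node $u\in \textit{Sink}$. By Definition~\ref{tl}, membership $u\in tl(X)$ means there is a node $v\in X$ such that $(u,v)\in E$. Thus $u$ has at least one outgoing edge, namely $(u,v)$.

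On the other hand, item~1 of Definition~\ref{transition system} defines $\textit{Sink}$ as the set of nodes without outgoing edges. So $u\in\textit{Sink}$ means there is no $w\in V$ with $(u,w)\in E$. This contradicts the existence of the edge $(u,v)$, and therefore $u\notin tl(X)$.

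No step presents a real obstacle. The only point needing care is the direction of the edge: $tl(X)$ collects the \emph{tails} of edges whose heads lie in $X$, so the edge $(u,v)$ leaves $u$. It is exactly this outgoing edge that conflicts with $u$ being a sink. The argument uses neither the choice of $X$, nor the remaining components $Q,H,A,\{C_a\}_{a\in A}$ of the mechanism, nor any earlier lemma.
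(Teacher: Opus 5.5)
Your proof is correct and takes the same approach as the paper. The paper simply states that the lemma follows from Definition~\ref{tl}: membership in $tl(X)$ requires an outgoing edge from $u$, and a sink node has none by item~1 of Definition~\ref{transition system}.
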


\begin{definition}\label{X}
For any mechanism $M=(V,E,Q,H,A,\{C_a\}_{a\in A})$, let sequence 
$\mathbb{X}_0^M,\mathbb{X}_1^M,\dots,\mathbb{X}_\infty^M$ of sets of nodes be defined recursively as follows:
$$
\mathbb{X}_i^M=
\begin{cases}
    H^+, & \text{if $i=0$},\\
    (\mathbb{X}_{i-1}^M\cup tl(\mathbb{X}_{i-1}^M))^+,& \text{if $0<i<\infty$},\\
    V, & \text{if $i=\infty$}.\\
\end{cases}
$$
\end{definition}
We omit the superscript in the expression $\mathbb{X}_i^M$ when its value is clear from the context. For the example depicted in Figure~\ref{intro-lakes-tree subfigure},
\begin{align*}
   \mathbb{X}_0=&H^+=\{u_1,u_2,u_3,u_4,u_5,u_6,v_5,v_6\},\\
   \mathbb{X}_1=&(\mathbb{X}_0\cup tl(\mathbb{X}_0))^+=(\{u_1,u_2,u_3,u_4,u_5,u_6,v_5,v_6\}\cup\{v_2,v_3,v_4,v_5,v_6,v_7\})^+\\
   =&\{u_1,u_2,u_3,u_4,u_5,u_6,v_1,v_2,v_3,v_4,v_5,v_6,v_7\},\\
   \mathbb{X}_2=&(\mathbb{X}_1\cup tl(\mathbb{X}_1))^+=\mathbb{X}_1,\\
   \mathbb{X}_3=&(\mathbb{X}_2\cup tl(\mathbb{X}_2))^+=\mathbb{X}_1,\\
   \dots&
\end{align*}
and $\mathbb{X}_\infty$ is the set of all nodes of the tree, including the two unlabelled leaf nodes.
\begin{lemma}
$\mathbb{X}_0\subseteq \mathbb{X}_1\subseteq \mathbb{X}_2\subseteq\dots \subseteq \mathbb{X}_\infty$
for any mechanism $(V,E,Q,H,A,\{C_a\}_{a\in A})$.
\end{lemma}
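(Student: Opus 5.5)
The chain of inclusions splits into two kinds of steps: the finite steps $\mathbb{X}_{i-1}\subseteq\mathbb{X}_i$ for $0<i<\infty$, and the final step $\mathbb{X}_i\subseteq\mathbb{X}_\infty$. I will handle them separately. Neither requires induction over $i$, because each finite inclusion follows directly from the form of the recursive definition.

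\emph{Final step.} By Definition~\ref{X}, $\mathbb{X}_\infty=V$. Every $\mathbb{X}_i$ is a set of nodes of the mechanism, since closures and $tl$ are always taken inside $V$. Hence $\mathbb{X}_i\subseteq V=\mathbb{X}_\infty$ for every $i$.

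\emph{Finite steps.} The key observation is that the closure operation is extensive. By item~1 of Definition~\ref{plus}, $Y\subseteq Y^+$ for any set $Y\subseteq V$. Take $Y=\mathbb{X}_{i-1}\cup tl(\mathbb{X}_{i-1})$. Then
\[
\mathbb{X}_{i-1}\subseteq \mathbb{X}_{i-1}\cup tl(\mathbb{X}_{i-1})\subseteq(\mathbb{X}_{i-1}\cup tl(\mathbb{X}_{i-1}))^+=\mathbb{X}_i
\]
for each integer $i\ge 1$. This case also covers $i=1$, where $\mathbb{X}_0=H^+$, so no special treatment of the base set is needed.

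Combining the finite steps with the final step gives the whole chain. I do not expect any real obstacle. The only point needing care is to treat the index $\infty$ separately, since it is not reached by the recursion, and to note that $X^+$ is well-defined as the minimal set satisfying Definition~\ref{plus}. This holds because the family of sets closed under both items contains $V$ and is closed under intersection.
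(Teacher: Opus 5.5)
Your proof is correct, but it takes a slightly different route from the paper. The paper's proof cites monotonicity: $X\subseteq Y$ implies $X^+\subseteq Y^+$ and $tl(X)\subseteq tl(Y)$. So the step map $X\mapsto (X\cup tl(X))^+$ is monotone, which carries an inclusion $\mathbb{X}_{i-1}\subseteq\mathbb{X}_i$ forward to $\mathbb{X}_i\subseteq\mathbb{X}_{i+1}$. The chain then follows inductively from the base inclusion $\mathbb{X}_0\subseteq\mathbb{X}_1$, which itself relies on $H\subseteq H^+$ and is left implicit.

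You instead use that the step map is inflationary: $Y\subseteq Y\cup tl(Y)\subseteq (Y\cup tl(Y))^+$ by item~1 of Definition~\ref{plus}. This gives every inclusion $\mathbb{X}_{i-1}\subseteq\mathbb{X}_i$ in one line, with no induction and no need for monotonicity of $tl$ or of the closure. Your version is shorter and more self-contained. It also makes explicit two points the paper passes over: the separate treatment of the index $\infty$ via $X^+\subseteq V$, and the well-definedness of $X^+$ as an intersection of closed sets.

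The paper's pattern is the more general template. It works for any monotone step map whose first iterate contains the starting set, even a non-inflationary one. For this particular definition, though, that generality is not needed, and your extensivity argument suffices.
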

\begin{proof}
By Definition~\ref{plus}, if $X\subseteq Y$, then $X^+\subseteq Y^+$ for all sets $X,Y\subseteq V$.    
By Definition~\ref{tl}, if $X\subseteq Y$, then $tl(X)\subseteq tl(Y)$ for all sets $X,Y\subseteq V$.
Therefore, the statement of the lemma follows from Definition~\ref{X}.
\end{proof}

The next lemma connects the chain of sets $\mathbb{X}_0\subseteq \mathbb{X}_1\subseteq\dots $ with an {\em arbitrary} ranking function.
\begin{lemma}\label{phi le k}
$\phi(u)\le k$,
for any mechanism $(V,E,Q,H,A,\{C_a\}_{a\in A})$, ranking function $\phi$,    
extended natural number $k$ such that $0\le k\le \infty$, and node $u\in \mathbb{X}_k$. 
\end{lemma}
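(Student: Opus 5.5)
We argue by induction on $k$ over the finite values $k=0,1,2,\dots$, and treat $k=\infty$ separately. The case $k=\infty$ is immediate: every value of a ranking function is an extended natural number, so $\phi(u)\le\infty$ for every node $u\in V=\mathbb{X}_\infty$.

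\emph{Base case $k=0$.} Here $\mathbb{X}_0=H^+$ by Definition~\ref{X}. By Lemma~\ref{phi plus}, applied with $X=H$, every $u\in H^+$ satisfies
\[
\phi(u)\le\max_{v\in H}\phi(v).
\]
By item~1 of Definition~\ref{ranking function}, $\phi(v)=0$ for each $v\in H$, so $\phi(u)\le 0$.

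There is one subtlety: if $H=\varnothing$, the maximum is taken over the empty set. In that case $H^+=\varnothing$, because the empty set is closed under item~2 of Definition~\ref{plus}. Indeed, a non-sink node always has at least one successor, and that successor cannot lie in the empty set. So the claim holds vacuously. I will note this degenerate case explicitly and use the same remark in the inductive step.

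\emph{Inductive step.} Assume $\phi(w)\le k-1$ for all $w\in\mathbb{X}_{k-1}$, and set $Y=\mathbb{X}_{k-1}\cup tl(\mathbb{X}_{k-1})$. By Lemma~\ref{phi plus}, it suffices to show that $\phi(w)\le k$ for every $w\in Y$; the degenerate case $Y=\varnothing$ is handled as above.

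If $w\in\mathbb{X}_{k-1}$, the induction hypothesis gives $\phi(w)\le k-1\le k$.

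If $w\in tl(\mathbb{X}_{k-1})$, then by Definition~\ref{tl} there is a node $v\in\mathbb{X}_{k-1}$ with $(w,v)\in E$. Item~3 of Definition~\ref{ranking function} then gives
\[
\phi(w)\le\phi(v)+1\le (k-1)+1=k.
\]

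Hence every element of $Y$ has rank at most $k$, and Lemma~\ref{phi plus} yields $\phi(u)\le k$ for all $u\in Y^+=\mathbb{X}_k$. The main point needing care is the empty-maximum convention in Lemma~\ref{phi plus}, since the minimum of the empty set is $\infty$ but no convention is fixed for the maximum. The observation that $\varnothing^+=\varnothing$ settles this.
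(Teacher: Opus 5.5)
Your proof is correct and follows the paper's argument essentially step for step. The paper also handles $k=\infty$ trivially, applies Lemma~\ref{phi plus} with $X=H$ for the base case, and in the inductive step applies Lemma~\ref{phi plus} to $\mathbb{X}_{k-1}\cup tl(\mathbb{X}_{k-1})$, bounding the $\mathbb{X}_{k-1}$ part by the induction hypothesis and the $tl$ part by item~3 of Definition~\ref{ranking function}. Your extra observation that $\varnothing^+=\varnothing$, so the empty-maximum case is vacuous, is valid and fills a small point the paper leaves implicit.
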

\begin{proof}
We prove the statement of the lemma by induction on $k$.

\vspace{1mm}
\noindent{\em Base Case}: $k=0$. Then, $u\in H^+$ by the assumption $u\in \mathbb{X}_k$ and Definition~\ref{X}. Hence,
$\phi(u)\le \max_{v\in H}\phi(v)$ by Lemma~\ref{phi plus}. 
Thus, $\phi(u)\le 0$ by item~1 of Definition~\ref{ranking function}.

\vspace{1mm}
\noindent{\em Induction Step}: $0<k<\infty$. Then,
$u\in (\mathbb{X}_{k-1}\cup tl(\mathbb{X}_{k-1}))^+$ by the assumption $u\in \mathbb{X}_k$ and Definition~\ref{X}. Hence, by Lemma~\ref{phi plus},
$$
\phi(u)\le \max\{\phi(v)\mid v\in \mathbb{X}_{k-1}\cup tl(\mathbb{X}_{k-1})\}  
=\max\left( \max\{\phi(v)\mid v\in \mathbb{X}_{k-1}\}, 
\max\{\phi(v)\mid v\in tl(\mathbb{X}_{k-1})\}
\right).
$$
Thus, 
$
\phi(u)\le \max\left( k-1, 
\max\{\phi(v)\mid v\in tl(\mathbb{X}_{k-1})\}
\right)
$
by the induction hypothesis.
Then,
by Definition~\ref{tl},
$$
\phi(u)\le \max\left( k-1, 
\max\{\phi(v)\mid (v,w)\in E, w\in \mathbb{X}_{k-1}\}
\right).
$$
Hence, 
$
\phi(u)\le \max\left( k-1, 
\max\{\phi(v)\mid (v,w)\in E, \phi(w)\le k-1\}
\right)
$
again by the induction hypothesis.
Thus, 
$
\phi(u)\le \max\left( k-1, (k-1)+1
\right)
$
by item~3 of Definition~\ref{ranking function}.
Therefore, $\phi(u)\le k$.

\vspace{1mm}
\noindent{\em Infinity Case}: $k=\infty$. Note that $\phi(u)\le \infty$. Thus, $\phi(u)\le k$ because $k=\infty$.
\end{proof}

\begin{definition}\label{iota}
For any mechanism $M=(V,E,Q,H,A,\{C_a\}_{a\in A})$, the {\bf\em irrecoverability index} $\iota_M(u)$ of a node $u\in V$ is the minimal $i$ such that $u\in \mathbb{X}_i$.   
\end{definition}
Note that the irrecoverability index is well-defined because $\mathbb{X}_\infty=V$ by Definition~\ref{X}.
We omit the subscript in the expression $\iota_M(u)$ when its value is clear from the context. 
   \begin{figure}
        \begin{center}
       \scalebox{0.5}{\includegraphics{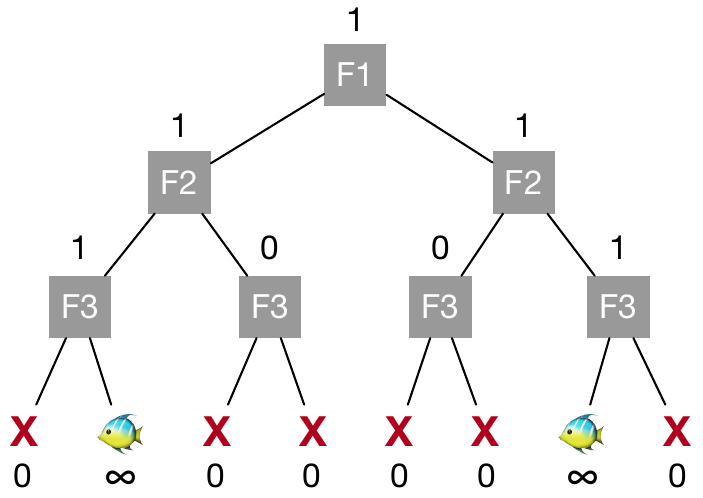}}
        \caption{Irrecoverability index $\iota(u)$.}
        \label{intro-lakes-iota subfigure}
        \end{center}
    \end{figure}
Figure~\ref{intro-lakes-iota subfigure} shows the irrecoverability index for each node of the mechanism depicted in Figure~\ref{intro-lakes-tree subfigure}. It is easy to see that, in this example, the irrecoverability index is a ranking function. This is also not a coincidence. As we prove in Lemma~\ref{iota is ranking}, the irrecoverability index of any mechanism is a ranking function. The four lemmas below prove each of the four properties of a ranking function from Definition~\ref{ranking function} separately.

\begin{lemma}
    $\iota(u)=0$ for each harmful sink node $u\in H$ of a mechanism $(V,E,Q,H,A,\{C_a\}_{a\in A})$.
\end{lemma}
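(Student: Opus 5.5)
To show $\iota(u)=0$ for a harmful sink $u\in H$, I will use Definition~\ref{iota}: $\iota(u)$ is the least extended natural number $i$ with $u\in\mathbb{X}_i$. So it is enough to check that $u\in\mathbb{X}_0$. Since $0$ is the smallest extended natural number, the minimum is then $0$.

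By Definition~\ref{X}, $\mathbb{X}_0=H^+$. By item~1 of Definition~\ref{plus}, $H\subseteq H^+$. The assumption $u\in H$ therefore gives $u\in H^+=\mathbb{X}_0$, and so $\iota(u)\le 0$. Because $\iota(u)$ takes values in the extended natural numbers, $\iota(u)\ge 0$ as well, and hence $\iota(u)=0$.

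There is no real obstacle here. The one point to get right is that $\mathbb{X}_0$ is the closure $H^+$ and not $H$ itself, so the inclusion $H\subseteq H^+$ must be cited explicitly from Definition~\ref{plus}. Lemma~\ref{plus no sinks}, which says the closure adds no new sink nodes, is not needed for this direction.
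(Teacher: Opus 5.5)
Your proof is correct and is essentially the paper's argument: $H\subseteq H^+=\mathbb{X}_0$ by item~1 of Definition~\ref{plus} and Definition~\ref{X}, hence $\iota(u)=0$ by Definition~\ref{iota}. Your explicit remark that $\iota(u)\ge 0$ holds trivially is a detail the paper leaves implicit.
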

\begin{proof}
Note that $H\subseteq H^+$ by Definition~\ref{plus}. Hence, $H\subseteq\mathbb{X}_0$ by Definition~\ref{X}. Therefore, $\iota(u)=0$ for each node $u\in H$ by Definition~\ref{iota}.   
\end{proof}

\begin{lemma}
    $\iota(u)=\infty$ for each non-harmful sink node $u\in \textit{Sink}\setminus H$ of a mechanism $(V,E,Q,H,A,\{C_a\}_{a\in A})$.
\end{lemma}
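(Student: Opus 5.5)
We need to show that $\iota(u)=\infty$ for every $u\in \textit{Sink}\setminus H$. By Definition~\ref{iota}, $\iota(u)$ is the minimal extended natural number $i$ with $u\in\mathbb{X}_i$. Since $\mathbb{X}_\infty=V$ contains $u$, it suffices to prove that $u\notin\mathbb{X}_i$ for every finite $i$. I will prove this by induction on $i$, using only Lemma~\ref{plus no sinks} and Lemma~\ref{tl no sinks}.

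\emph{Base case} ($i=0$). We have $\mathbb{X}_0=H^+$. Suppose $u\in H^+$. Since $u$ is a sink node, Lemma~\ref{plus no sinks} gives $u\in H$, contradicting $u\in\textit{Sink}\setminus H$. Hence $u\notin\mathbb{X}_0$.

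\emph{Induction step} ($0<i<\infty$). Assume $u\notin\mathbb{X}_{i-1}$ and suppose, towards a contradiction, that $u\in\mathbb{X}_i=(\mathbb{X}_{i-1}\cup tl(\mathbb{X}_{i-1}))^+$. Because $u$ is a sink, Lemma~\ref{plus no sinks} yields $u\in\mathbb{X}_{i-1}\cup tl(\mathbb{X}_{i-1})$. Lemma~\ref{tl no sinks} rules out $u\in tl(\mathbb{X}_{i-1})$, so $u\in\mathbb{X}_{i-1}$. This contradicts the induction hypothesis.

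Therefore $u$ belongs to no $\mathbb{X}_i$ with finite $i$, and the minimal index is $\infty$, i.e., $\iota(u)=\infty$.

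There is no real obstacle here. The only point needing care is that Lemma~\ref{plus no sinks} must be applied to the set $\mathbb{X}_{i-1}\cup tl(\mathbb{X}_{i-1})$ rather than to $\mathbb{X}_{i-1}$ alone, so that the $tl$ part is handled separately by Lemma~\ref{tl no sinks}.
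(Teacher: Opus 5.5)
Your proof is correct and follows essentially the same route as the paper: induction on $i$ showing $u\notin\mathbb{X}_i$ for every finite $i$, with the base case handled by Lemma~\ref{plus no sinks} and the inductive step by combining Lemma~\ref{tl no sinks} with Lemma~\ref{plus no sinks}. Your indexing (deriving $u\notin\mathbb{X}_i$ from $u\notin\mathbb{X}_{i-1}$ for $0<i<\infty$) is slightly cleaner than the paper's. The paper states its step for ``any $i>0$'', although the step is needed from $i=0$ onward.
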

\begin{proof}
Consider any node $u\in \textit{Sink}\setminus H$. By Definition~\ref{iota}, it suffices to show that $u\notin \mathbb{X}_i$ for each $i<\infty$. We prove this by induction on $i$.
For the base case, note that $u\notin H^+$ by Lemma~\ref{plus no sinks} and the assumption $u\in \textit{Sink}\setminus H$. Hence, $u\notin \mathbb{X}_0$ by Definition~\ref{X}.

Consider any $i>0$ and suppose that $u\notin \mathbb{X}_i$. Thus, $u\notin \mathbb{X}_i\cup tl(\mathbb{X}_i)$ by Lemma~\ref{tl no sinks} and the assumption $u\in \textit{Sink}\setminus H$. Thus, $u\notin (\mathbb{X}_i\cup tl(\mathbb{X}_i))^+$ by Lemma~\ref{plus no sinks}. Therefore, $u\notin \mathbb{X}_{i+1}$ by Definition~\ref{X}.
\end{proof}

\begin{lemma}
If $(u,v)\in E$, then $\iota(u)\le \iota(v)+1$, for any $u,v\in V$ of a mechanism $(V,E,Q,H,A,\{C_a\}_{a\in A})$.  
\end{lemma}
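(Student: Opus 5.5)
We will prove that if $(u,v)\in E$, then $\iota(u)\le \iota(v)+1$, by a case split on whether $\iota(v)$ is finite. If $\iota(v)=\infty$, then $\iota(v)+1=\infty$ by the conventions in the preamble of this section, and $\iota(u)\le\infty$ holds trivially, since $\iota(u)$ is an extended natural number. This case requires no further work.

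Now suppose $\iota(v)=i<\infty$. By Definition~\ref{iota}, $v\in\mathbb{X}_i$. We will show that $u\in\mathbb{X}_{i+1}$, and then Definition~\ref{iota} gives $\iota(u)\le i+1=\iota(v)+1$. The membership follows in three steps.
\begin{enumerate}
\item Since $(u,v)\in E$ and $v\in\mathbb{X}_i$, Definition~\ref{tl} yields $u\in tl(\mathbb{X}_i)$.
\item Hence $u\in \mathbb{X}_i\cup tl(\mathbb{X}_i)$.
\item By item~1 of Definition~\ref{plus}, every set is contained in its closure, so $u\in(\mathbb{X}_i\cup tl(\mathbb{X}_i))^+$. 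This set equals $\mathbb{X}_{i+1}$ by Definition~\ref{X}, because $0<i+1<\infty$.
\end{enumerate}

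There is no real obstacle here. The only points needing care are the index bookkeeping and the boundary cases. The case $i=0$ fits the same pattern, since $\mathbb{X}_1$ is defined by the recursive clause from $\mathbb{X}_0=H^+$. The $\infty$ case is handled separately because Definition~\ref{X} sets $\mathbb{X}_\infty=V$ rather than using the recursive formula.
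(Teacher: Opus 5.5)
Your proof is correct and follows the paper's argument: $v\in\mathbb{X}_{\iota(v)}$, hence $u\in tl(\mathbb{X}_{\iota(v)})\subseteq(\mathbb{X}_{\iota(v)}\cup tl(\mathbb{X}_{\iota(v)}))^+=\mathbb{X}_{\iota(v)+1}$. Your separate treatment of $\iota(v)=\infty$ spells out a boundary case the paper passes over silently; there the chain still holds because $\mathbb{X}_{\infty+1}=\mathbb{X}_\infty=V$.
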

\begin{proof}
Note that
$v\in \mathbb{X}_{\iota(v)}$ 
by Definition~\ref{iota}.
Thus, $u\in tl(\mathbb{X}_{\iota(v)})$
by the assumption $(u,v)\in E$ and Definition~\ref{tl}.
Hence $u\in \mathbb{X}_{\iota(v)}\cup tl(\mathbb{X}_{\iota(v)})$.
Then, $u\in (\mathbb{X}_{\iota(v)}\cup tl(\mathbb{X}_{\iota(v)}))^+$ by item~1 of Definition~\ref{plus}.
Thus,
$u\in \mathbb{X}_{\iota(v)+1}$ by Definition~\ref{X}.
Therefore, $\iota(u)\le \iota(v)+1$ by Definition~\ref{iota}.
\end{proof}

\begin{lemma}
For each decision node $u\in V\setminus Sink$ of a mechanism $(V,E,Q,H,A,\{C_a\}_{a\in A})$, there is node $v\in V$ such that $(u,v)\in E$ and $\iota(u)\le \iota(v)$.     
\end{lemma}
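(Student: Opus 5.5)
We have to show that every decision node $u$ has a successor $v$ with $\iota(u)\le\iota(v)$. We split into cases on the value of $\iota(u)$.

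\emph{Case $\iota(u)=0$.} Here $\iota(u)\le\iota(v)$ holds for any successor $v$. A successor exists because $u\notin\textit{Sink}$, so this case is trivial.

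\emph{Case $\iota(u)=\infty$.} Then $u\notin\mathbb{X}_i$ for every finite $i$. We need a successor $v$ with $\iota(v)=\infty$. Suppose instead that every successor $v$ has finite $\iota(v)$. There are finitely many successors, since $V$ is finite, so let $m$ be the largest of these indices. Monotonicity of the chain gives that every successor lies in $\mathbb{X}_m\subseteq\mathbb{X}_{m+1}$. By item~2 of Definition~\ref{plus}, and since $\mathbb{X}_{m+1}$ is a closure of a set, $u\in\mathbb{X}_{m+1}$. This contradicts $\iota(u)=\infty$.

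\emph{Case $0<\iota(u)=i<\infty$.} Then $u\notin\mathbb{X}_{i-1}$ and $u\in(\mathbb{X}_{i-1}\cup tl(\mathbb{X}_{i-1}))^+$. We want a successor $v\notin\mathbb{X}_{i-1}$, because then $\iota(v)\ge i$.

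Suppose instead that every successor of $u$ lies in $\mathbb{X}_{i-1}$. For $i-1\ge 1$, $\mathbb{X}_{i-1}$ is of the form $Y^+$. For $i-1=0$ it is $H^+$, also a closure. Closures satisfy item~2 of Definition~\ref{plus}, so $u\in\mathbb{X}_{i-1}$. This is a contradiction, so some successor $v\notin\mathbb{X}_{i-1}$ exists.

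Since the chain is increasing, $v\notin\mathbb{X}_j$ for all $j\le i-1$. Therefore $\iota(v)\ge i=\iota(u)$.

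The step needing the most care is exactly this use of the closure property of $\mathbb{X}_{i-1}$. Rather than invoking Lemma~\ref{non-plus child} directly, we argue from item~2 of Definition~\ref{plus}, applied to $Y=\mathbb{X}_{i-2}\cup tl(\mathbb{X}_{i-2})$ (or $Y=H$ when $i=1$). Any $X^+$ contains every non-sink node all of whose successors lie in $X^+$, and this property transfers to $\mathbb{X}_{i-1}=Y^+$.

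The $\infty$ case uses the same closure property together with finiteness of $V$. In fact the two cases can be unified: let $i=\iota(u)$ and pick a successor outside $\mathbb{X}_{j}$ for every finite $j<i$. Finiteness of $V$ guarantees the chain stabilizes, so a single $j$ suffices, namely the stabilization point.
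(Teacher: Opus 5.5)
Your proof is correct and is essentially the paper's. You merge the paper's separate cases $\iota(u)=1$ and $1<\iota(u)<\infty$ by noting that $\mathbb{X}_{i-1}$ is a closure in either case, and you re-derive Lemma~\ref{non-plus child} inline from item~2 of Definition~\ref{plus} instead of citing it. The only real divergence is the case $\iota(u)=\infty$. There the paper avoids taking a maximum over successors, and so does not need finiteness, by using $tl$ directly: any successor $v$ with $\iota(v)<\infty$ would give $u\in tl(\mathbb{X}_{\iota(v)})\subseteq\mathbb{X}_{\iota(v)+1}$. This also shows that every successor of $u$ has index $\infty$, not just one.
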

\begin{proof} We consider the following four cases separately:

\vspace{1mm}
\noindent{\em Case 1:} $\iota(u)=0$. The assumption $u\in V\setminus Sink$ implies that there is $v\in V$ such that $(u,v)\in E$. Then, $\iota(u)=0\le \iota(v)$.

\vspace{1mm}
\noindent{\em Case 2:}  $\iota(u)=1$. Thus, $u\notin\mathbb{X}_{0}$ by Definition~\ref{iota}.
Hence, $u\notin H^+$ by Definition~\ref{X}. Then, by Lemma~\ref{non-plus child} and the assumption $u\in V\setminus Sink$ of the lemma, there exists a node $v\in V$ such that $(u,v)\in E$ and $v\notin H^+$.
Hence, $v\notin \mathbb{X}_0$ by Definition~\ref{X}. Thus, $\iota(v)>0$ by Definition~\ref{iota}. Therefore, $\iota(v)\ge\iota(u)$ by the assumption $\iota(u)=1$ of the case.

\vspace{1mm}
\noindent{\em Case 3:}  $1< \iota(u)<\infty$.
Then, $u\notin\mathbb{X}_{\iota(u)-1}$ by Definition~\ref{iota}.
Hence, $u\notin (\mathbb{X}_{\iota(u)-2}\cup tl(\mathbb{X}_{\iota(u)-2}))^+$ by Definition~\ref{X} and the assumption $1< \iota(u)$ of the case. Thus, by Lemma~\ref{non-plus child} and the assumption $u\in V\setminus Sink$ of the current lemma, there exists a node $v\in V$ such that $(u,v)\in E$ and $v\notin (\mathbb{X}_{\iota(u)-2}\cup tl(\mathbb{X}_{\iota(u)-2}))^+$.
Then,
$v\notin\mathbb{X}_{\iota(u)-1}$
by Definition~\ref{X}.
Hence,
$\iota(v)>\iota(u)-1$
by Definition~\ref{iota}.
Therefore,
$\iota(v)\ge \iota(u)$.

\vspace{1mm}
\noindent{\em Case 4:}  $\iota(u)=\infty$.
The assumption
$u\in V\setminus Sink$
of the lemma implies that there is a node $v\in V$ such that $(u,v)\in E$. It suffices to show that $\iota(v)=\infty$. Towards a contradiction, suppose that $\iota(v)<\infty$. 
Note that $u\in tl(\mathbb{X}_{\iota(v)})$ by the assumption $(u,v)\in E$ and Definition~\ref{tl}.
Thus, $u\in \mathbb{X}_{\iota(v)}\cup tl(\mathbb{X}_{\iota(v)})$. Hence,
$u\in (\mathbb{X}_{\iota(v)}\cup tl(\mathbb{X}_{\iota(v)}))^+$ by item~1 of Definition~\ref{plus}.
Then, $u\in \mathbb{X}_{\iota(v)+1}$ by Definition~\ref{X}.
Therefore, $\iota(u)\le\iota(v)+1<\infty$ by the assumption $\iota(v)<\infty$, which contradicts the assumption  $\iota(u)=\infty$ of the case. 
\end{proof}

The next lemma follows from Definition~\ref{ranking function} and the four previous lemmas.
\begin{lemma}\label{iota is ranking}
For any mechanism, the irrecoverability index is a ranking function.
\end{lemma}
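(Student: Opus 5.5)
This statement is immediate once the four lemmas preceding it are in hand, mirroring how Lemma~\ref{index is ranking} was obtained for the responsibility index. The plan is to check the four items of Definition~\ref{ranking function} one by one for the function $\iota$, and to note that $\iota$ has the right type. First, $\iota$ maps every node $u\in V$ to an extended natural number. By Definition~\ref{iota}, $\iota(u)$ is the minimal index $i$ in the extended naturals with $u\in\mathbb{X}_i$. This minimum exists because $\mathbb{X}_\infty=V$ by Definition~\ref{X}, so the set of such indices is nonempty. It is also well-ordered, since the extended naturals are well-ordered with $\infty$ as the top element.

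Next, we match each item of Definition~\ref{ranking function} to one of the four preceding lemmas.

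\begin{enumerate}
\item $\iota(u)=0$ for $u\in H$. This is the first lemma, which uses $H\subseteq H^+=\mathbb{X}_0$.
\item $\iota(u)=\infty$ for $u\in\textit{Sink}\setminus H$. This is the second lemma. It uses Lemma~\ref{plus no sinks} and Lemma~\ref{tl no sinks} to show inductively that such a sink never enters any $\mathbb{X}_i$ with $i<\infty$.
\item $\iota(u)\le\iota(v)+1$ whenever $(u,v)\in E$. This is the third lemma, which uses $u\in tl(\mathbb{X}_{\iota(v)})\subseteq\mathbb{X}_{\iota(v)+1}$. When $\iota(v)=\infty$, the inequality holds trivially by the convention $\infty+1=\infty$.
\item Every decision node has a successor $v$ with $\iota(u)\le\iota(v)$. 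This is the fourth lemma, proved by the case split on $\iota(u)\in\{0\}$, $\{1\}$, $(1,\infty)$, and $\{\infty\}$ via Lemma~\ref{non-plus child}.
\end{enumerate}

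Combining these four facts gives exactly Definition~\ref{ranking function} for $\phi=\iota$.

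No step is a genuine obstacle, because all the substantive work already sits in the four preceding lemmas. The only point needing care is the boundary value $\infty$. In item~3 this is handled by the arithmetic convention $\infty+1=\infty$. In item~4 the fourth lemma's Case~4 shows that every successor of a node with index $\infty$ also has index $\infty$.
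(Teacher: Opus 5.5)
Your proposal is correct and matches the paper, which likewise obtains this lemma directly from Definition~\ref{ranking function} together with the four preceding lemmas that verify items~1--4 for $\iota$. Your extra remarks on the well-definedness of $\iota$ (via $\mathbb{X}_\infty=V$) and on the case $\iota(v)=\infty$ in item~3 (via $\infty+1=\infty$) simply make explicit details that the paper leaves implicit.
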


Recall that Figure~\ref{intro-lakes-phi subfigure}, Figure~\ref{intro-lake-rho subfigure}, and Figure~\ref{intro-lakes-iota subfigure} visualise three different ranking functions for the mechanism depicted in Figure~\ref{intro-lakes-tree subfigure}. It is easy to see that, for each node of the tree, the value of the function in Figure~\ref{intro-lakes-iota subfigure} is at least as large as the corresponding values of the two other ranking functions. This too is not a coincidence. The next lemma proves that all ranking functions on all mechanisms are bounded by the irrecoverability index. Since, by Lemma~\ref{iota is ranking}, the irrecoverability index itself is a ranking function, this means that the irrecoverability index is equal to the pointwise maximum of all ranking functions. 

\begin{lemma}\label{phi le iota}
For each ranking function $\phi$ and node $u$ of a mechanism,
$\phi(u)\le \iota(u)$.
\end{lemma}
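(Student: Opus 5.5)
This is essentially immediate from Lemma~\ref{phi le k} combined with Definition~\ref{iota}. Fix a ranking function $\phi$ and a node $u$ of the mechanism, and set $k=\iota(u)$. By Definition~\ref{iota}, $k$ is an extended natural number with $u\in\mathbb{X}_k$. Such a $k$ exists because $\mathbb{X}_\infty=V$; if $u$ belongs to no finite stage, then $k=\infty$.

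We then apply Lemma~\ref{phi le k} to $\phi$, this $k$, and the node $u\in\mathbb{X}_k$. The lemma gives $\phi(u)\le k=\iota(u)$ directly. Both relevant cases are covered by that lemma: the finite case through its base case and induction step, and the case $k=\infty$ through its infinity case. No case split is needed here.

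There is no real obstacle. The only point needing care is that the minimum in Definition~\ref{iota} is attained, so that $u\in\mathbb{X}_{\iota(u)}$ holds literally rather than only in a limit sense. This follows because the index set is a subset of the extended naturals containing $\infty$, and it is well-ordered. The substantive work, namely propagating the bound through the closure $X^+$ via Lemma~\ref{phi plus} and through $tl$ via item~3 of Definition~\ref{ranking function}, has already been done in Lemma~\ref{phi le k}.
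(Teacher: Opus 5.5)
Your proposal is correct and is the same argument as the paper's, which also takes $k=\iota(u)$, uses Definition~\ref{iota} to get $u\in\mathbb{X}_{\iota(u)}$, and then applies Lemma~\ref{phi le k}. Your added remark that the minimum is attained, because the extended naturals are well-ordered, is a correct spelling-out of a point the paper leaves implicit.
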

\begin{proof}
The statement of the lemma follows from Lemma~\ref{phi le k} and Definition~\ref{iota}.
\end{proof}

We are finally ready to state and prove our second main result: the existence of a computationally efficient procedure for finding a $k$-agent mechanism providing $k$-oversight. 

\begin{theorem}\label{second theorem}
For any mechanism $M$ and integer $k\ge 1$, it can be determined in polynomial time whether there exists a mechanism $M'$ structurally equivalent to $M$ that ensures $k$-oversight. If such a mechanism $M'$ exists, it can be constructed in polynomial time as well. 
\end{theorem}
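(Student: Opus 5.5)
The plan is to show that a structurally equivalent $k$-oversight mechanism exists if and only if $\iota(q)\ge k$ for every initial node $q\in Q$. Here $\iota$ is the irrecoverability index of Definition~\ref{iota}, which depends only on $V$, $E$ and $H$. The algorithm then computes $\iota$, checks this condition, and, if it holds, outputs the mechanism from the proof of Lemma~\ref{construction lemma} with $\phi=\iota$.

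For the ``if'' direction, Lemma~\ref{iota is ranking} says $\iota$ is a ranking function on $M$. So if $\iota(q)\ge k$ for all $q\in Q$, Lemma~\ref{construction lemma} yields a structurally equivalent mechanism with agents $\{1,\dots,k\}$ that ensures $k$-oversight. Concretely, $C'_a=\{u\mid \iota(u)=a\}$.

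For the ``only if'' direction, suppose $M'$ is structurally equivalent to $M$ and ensures $k$-oversight. As in the proof of Theorem~\ref{k-oversight theorem}, the responsibility index $\rho_{M'}$ then satisfies $\rho_{M'}(q)\ge k$ for each $q\in Q$. By Lemma~\ref{index is ranking}, $\rho_{M'}$ is a ranking function on $M'$. By Lemma~\ref{structurally equivalent - ranking function}, it is therefore also a ranking function on $M$. Lemma~\ref{phi le iota} now gives $k\le\rho_{M'}(q)\le\iota(q)$, and $\iota$ is the same for $M$ and $M'$ because it is defined from $V,E,H$ only. Thus existence is equivalent to $\min_{q\in Q}\iota(q)\ge k$.

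The remaining work, and the main obstacle, is showing that $\iota$ is computable in polynomial time even though the chain $\mathbb{X}_0\subseteq\mathbb{X}_1\subseteq\cdots$ is indexed by all naturals. I will proceed as follows.
\begin{enumerate}
\item Compute the closure $X^+$ of Definition~\ref{plus} by the standard attractor-style fixpoint: repeatedly add a non-sink node all of whose successors are already in the set. This takes at most $|V|$ rounds, each polynomial (or linear overall with successor counters).
\item Compute $tl(X)$ directly from $E$.
\item Iterate Definition~\ref{X}. The chain is monotone and bounded by $V$, and each $\mathbb{X}_i$ depends only on $\mathbb{X}_{i-1}$. Hence once $\mathbb{X}_i=\mathbb{X}_{i-1}$, all later sets are equal. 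So at most $|V|+1$ distinct stages occur, and the nodes never added get $\iota=\infty$.
\end{enumerate}
Each stage is polynomial, so $\iota$ is computed in polynomial time. The test $\min_{q\in Q}\iota(q)\ge k$ is trivial, with $k$ compared to values bounded by $|V|$ or $\infty$. The control sets $C'_a$ can be read off directly for $a=1,\dots,k$; if $k>|V|$ and the test passes, then $\iota(q)=\infty$ for all $q\in Q$, and only the sets $C'_a$ for $a\le|V|$ can be nonempty. This caveat about $k$ possibly being given in binary I would address briefly, e.g.\ by listing only the nonempty control sets.
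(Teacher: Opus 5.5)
Your proposal is correct and follows the paper's proof essentially step for step. You compute $\iota$, and when $\min_{q\in Q}\iota(q)\ge k$ you apply Lemma~\ref{construction lemma} with $\phi=\iota$. Otherwise you rule out $M'$ because $\rho_{M'}$ is a ranking function on $M$ and so is bounded by $\iota$. Your explicit stabilization argument (the chain $\mathbb{X}_i$ grows strictly at most $|V|$ times) and your remark about $k$ given in binary are correct details that the paper leaves implicit.
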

\begin{proof}
The family of sets $\{\mathbb{X}_i\}_i$ for the mechanism $M=(V,E,Q,H,A,\{C_a\}_{a\in A})$ can be constructed in polynomial time using the recursive construction from Definition~\ref{X}. Thus, by Definition~\ref{iota}, the value $\iota(u)$ for each node $u\in V$ can be computed in polynomial time. We consider the following two cases separately:

\vspace{1mm}\noindent
{\em Case 1}: $\min_{q\in Q}\iota(q)<k$. In this case, mechanism $M'$ does not exist. To prove this, assume that a mechanism $M'=(V,E,Q,H,A',\{C'_a\}_{a\in A'})$ ensures $k$-oversight and is structurally equivalent to mechanism $M$. Let $\rho'$ be the responsibility index of mechanism $M'$. By Lemma~\ref{index is ranking}, function $\rho'$ is a ranking function for mechanism $M'$. By Lemma~\ref{structurally equivalent - ranking function}, function $\rho'$ is also a ranking function for mechanism $M$. Thus, $\rho'(u)\le \iota(u)$ for each node $u\in V$  by Lemma~\ref{phi le iota}. Hence, $\min_{q\in Q}\rho'(q)<k$ by the assumption $\min_{q\in Q}\iota(q)<k$ of the case. Then, there is an initial node $q_0\in Q$ such that $\rho'(q_0)<k$. Thus, by Definition~\ref{index}, there is a decision path $u_1,\dots,u_n$ from the initial node $u_1=q_0$ to a harmful sink node $u_n\in H$ whose responsibility count is less than $k$ under mechanism $M'$. Hence, by Definition~\ref{count}, fewer than $k$ agents are responsible along the path $u_1,\dots,u_n$ under mechanism $M'$. Therefore, by Definition~\ref{oversight}, mechanism $M'$ does not ensure $k$-oversight.

\vspace{1mm}\noindent
{\em Case 2}: $\min_{q\in Q}\iota(q)\ge k$. Note that $\iota$ is a ranking function by Lemma~\ref{iota is ranking}. Thus, by Lemma~\ref{construction lemma}, there is a mechanism structurally equivalent to $M$ that has only $k$ agents and ensures $k$-oversight. The construction of this mechanism, specified by equation~\eqref{18-aug-a}, can be accomplished in polynomial time because the value $\iota(u)$ for each node $u\in V$ can be computed in polynomial time.
\end{proof}

Note that the construction in the proof of Theorem~\ref{second theorem} produces a structurally equivalent mechanism under which some of the nodes might not be controlled by any agents (that is, $\bigcup_{a\in A}C_a\subsetneq V\setminus \textit{Sink}$). Intuitively, such nodes can be viewed as controlled by nature or the environment. However, if desired, these nodes can be put under the control of any of the agents. This modification preserves $k$-oversight. 

\section{Related Work}\label{related work}

Counterfactual responsibility is probably the most widely discussed responsibility concept~\citep{w17}. However, at least four other formal definitions of responsibility have been considered. One of them is the notion of {\em responsibility for seeing to harm}~\citep{b11jal,b11jpl,nt21ijcai,nt23apal,s24aaai}, extensively studied in STIT logic~\citep{bp90krdr,h01,h95jpl,hp17rsl,ow16sl} under the name ``deliberatively seeing to it''. In the setting where agents act sequentially, as in our Definition~\ref{transition system}, an agent is responsible for seeing to harm if the agent took an action that eliminated the last possibility of a non-harmful outcome. For example, although no agent is responsible counterfactually along the decision path $v_1,v_2,v_5,u_2$ in Figure~\ref{intro-lakes-tree subfigure}, factory F2 is responsible for seeing to harm along that decision path. This is because, along that path, the action ``dump into lake C'' by factory F2 eliminated the last possibility that the fish in all three lakes survive. In sequential mechanisms, at most one agent could be the one who eliminates the last possibility of a non-harmful outcome. Thus, in such mechanisms, at most one agent can be responsible along any decision path. As a result, for $k\ge 2$, it is impossible to construct a sequential decision mechanism that ensures $k$-oversight. 

One can also consider a ``bimodal'' responsibility: an agent is bimodally responsible along a decision path if the agent is responsible either counterfactually or for seeing to harm. For any decision path, {\em bimodal} responsibility count can be one more than counterfactual responsibility count (see Definition~\ref{count}) because, along the path, in addition to the agents responsible counterfactually, there could be an additional agent responsible for seeing to harm. Nevertheless, it is relatively easy to see that Lemma~\ref{index is ranking} holds for {\em bimodal} responsibility index. Thus, by Lemma~\ref{phi le iota}, bimodal responsibility index is no more than the irrecoverability index. Therefore, bimodal responsibility cannot ensure a higher degree of oversight than we produce in Theorem~\ref{second theorem} using counterfactual responsibility alone. In the case of our running example from Figure~\ref{intro-lakes-tree subfigure}, the bimodal responsibility index is simply equal to the irrecoverability index. Both are shown in Figure~\ref{intro-lakes-iota subfigure}.

Another commonly discussed responsibility concept is the best-effort responsibility~\citep{bh18ej,n26aaai}. As an example, imagine that three agents concurrently (say by a paper ballot) vote on whether to spare the life of a cat. The decision is made by the majority vote. Suppose that all three agents voted to kill the cat. The cat is dead. Who is responsible for this? None of the agents had a strategy to save the life of the cat. Thus, none of the agents is responsible counterfactually. At the same time, no individual vote guarantees the death of the cat. Hence, none of the agents is responsible for seeing to the death of the cat. Nevertheless, most people will agree that all three agents are intuitively responsible for the death. This intuition can be captured by the best-effort responsibility. Each agent is responsible because the cat is dead and did not use the action (vote to spare the cat's life) that constitutes the best effort to save the cat. Best-effort actions can be defined as non-weakly dominated actions, assuming that a harmful outcome has utility -1 and a non-harmful outcome has utility 0. 
Note that best-effort responsibility is only meaningful in the setting where agents take actions just once and all of them act simultaneously. It is not clear what a best-effort action could be in the settings of Definition~\ref{transition system}. The situation here is similar to classical game theory where the notion of weakly dominated strategy exists for strategic games and does not exist for extensive form games. Best-effort responsibility could also be considered in a probabilistic setting, where an agent might not be minimising the expected chances of a harmful outcome~\citep{dp22scw}. We do not discuss best-effort responsibility in this article because it is not compatible with the class of decision mechanisms that we consider.

Finally, \cite{s24aaai} introduced the notion of a higher-order responsibility. This notion has also been discussed in~\citep{jn26aaai}. An agent $a$ is second-order responsible for the harmful outcome if no agent is counterfactually responsible for the harm and the agent $a$ had a strategy that would guarantee that if harm happens then at least one agent is counterfactually responsible for it. One can similarly introduce third-, fourth-, and higher-order responsibility. We leave the study of $k$-oversight under higher-order responsibility as a direction for future research.

In addition to the works on formally defining various responsibility concepts, there are three papers~\citep{bh18ej,dp22scw,nt25ijcai} that show that gap-free mechanisms do not exist in some special classes of mechanisms. None of them studies how to construct a mechanism with desirable responsibility-related properties, as we do in Theorem~\ref{second theorem}. 

Finally, the concept of $k$-oversight is closely related to the concept of separation of duty in information security. The latter is a design principle stating that certain crucial steps of a process are carried out by different users~\citep*{gfp25acm-cs}.

\section{Conclusion}

This article introduced peer $k$-oversight as a responsibility-related property of sequential collective decision mechanisms. Rather than asking only who is responsible after a harmful outcome occurs, peer oversight treats responsibility as a design requirement: control over the decision process should be distributed so that at least $k$ agents are responsible whenever harm occurs.

The main results show that whenever $k$-oversight can be achieved by redistributing control over the decision nodes, it can be achieved using exactly $k$ agents. Moreover, whether such a redistribution exists can be determined in polynomial time, and a corresponding $k$-agent mechanism can be constructed efficiently.

These results suggest peer oversight as an architectural design principle for collective and multiagent decision-making systems, including multiagent AI systems in which agents are intended to oversee one another. A natural direction for future work is to allow the required degree of peer oversight to depend on the harmful outcome. Since harmful outcomes can differ in severity, more serious harms may warrant stronger oversight, with the required value of $k$ varying across harmful sink nodes.

\bibliographystyle{elsarticle-harv}
\bibliography{naumov}

\end{document}